\documentclass{llncs}
\usepackage{amssymb}
\usepackage{amsmath}

\newcommand{\F}{{\mathbb F}}
\newcommand{\fp}{{\mathbb F}_{p}}
\newcommand{\fpn}{{\mathbb F}_{p^n}}
\newcommand{\fpk}{{\mathbb F}_{p^k}}
\newcommand{\fpm}{{\mathbb F}_{p^m}}
\newcommand{\fthree}{{\mathbb F}_{3}}
\newcommand{\fthreen}{{\mathbb F}_{3^n}}

\newcommand{\fthreek}{{\mathbb F}_{3^k}}
\newcommand{\Tr}{{\operatorname{Tr}}}

\newtheorem{construction}{Construction}

\begin{document}

\pagestyle{plain}

\title{Ternary Binomial and Trinomial Bent Functions in the Completed  Maiorana-McFarland Class$^*$}
\titlerunning{Ternary Binomial Bent Functions}

\renewcommand{\thefootnote}{*}
\footnotetext{This work was supported by the Norwegian Research Council.}

\author{Tor Helleseth \and Alexander Kholosha \and Niki Spithaki}

\institute{The Selmer Center\\
Department of Informatics, University of Bergen\\
P.O. Box 7800, N-5020 Bergen, Norway\\
\email{\symbol{123}tor.helleseth,oleksandr.kholosha,niki.spithaki\symbol{125}@uib.no}}

\maketitle

\thispagestyle{plain}

\begin{abstract}
Two classes of ternary bent functions of degree four with two and three terms in the univariate representation that belong to the 
completed Maiorana-McFarland class are found. Binomials are mappings $\F_{3^{4k}}\mapsto\fthree$ given by $f(x)=\Tr_{4k}\big(a_1 
x^{2(3^k+1)}+a_2 x^{(3^k+1)^2}\big)$, where $a_1$ is a nonsquare in $\F_{3^{4k}}$ and $a_2$ is defined explicitly by $a_1$. Particular 
subclasses of the binomial bent functions we found can be represented by exceptional polynomials over $\fthreek$. Bent trinomials are 
mappings $\F_{3^{2k}}\mapsto\fthree$ given by $f(x)=\Tr_n\big(a_1 x^{2\cdot3^k+4} + a_2 x^{3^k+5} + a_3 x^2\big)$ with coefficients 
explicitly defined by the parity of $k$. The proof is based on a new criterion that allows checking bentness by analyzing first- and 
second-order derivatives of $f$ in the direction of a chosen $n/2$-dimensional subspace. 
\end{abstract}

\section{Introduction}
 \label{sec:intro}
Boolean bent functions were first introduced by Rothaus in 1976 as an interesting combinatorial object with the important property of 
having the maximum Hamming distance to the set of all affine functions. Later the research in this area was stimulated by the significant 
relation to the following topics in computer science: coding theory, sequences and cryptography (design of stream ciphers and $S$-boxes for 
block ciphers). Kumar, Scholtz and Welch in \cite{KuScWe85} generalized the notion of Boolean bent functions to the case of functions over 
an arbitrary finite field. Complete classification of bent functions looks hopeless even in the binary case. In the case of generalized 
bent functions, things are naturally much more complicated. However, many explicit methods are known for constructing bent functions either 
from scratch or based on other, simpler bent functions. 

The finite field $\fpk$ is a subfield of $\fpn$ if and only if $k$ divides $n$. The trace mapping from $\fpn$ to the subfield $\fpk$ is 
defined by $\Tr_k^n(x)=\sum_{i=0}^{n/k-1}x^{p^{ik}}$. In the case when $k=1$, we use the notation $\Tr_n(x)$ instead of $\Tr_1^n(x)$. 

Given a function $f(x)$ mapping $\fpn$ to $\fp$, the direct and inverse {\em Walsh transform} operations on $f$ are defined at a point by 
the following respective identities
\[S_f(b)=\sum_{x\in\fpn}\omega^{f(x)-\Tr_n(bx)}\quad\mbox{and}\quad \omega^{f(x)}=\frac{1}{p^n}\sum_{b\in\fpn}S_f(b)\omega^{\Tr_n(bx)}\]
where $\Tr_n():\fpn\rightarrow\fp$ denotes the absolute trace function, $\omega=e^{\frac{2\pi i}{p}}$ is the complex primitive 
$p^{\rm{th}}$ root of unity and elements of $\fp$ are considered as integers modulo $p$. 

According to \cite{KuScWe85}, $f(x)$ is called a {\em $p$-ary bent function} if all its Walsh coefficients satisfy $|S_f(b)|^2=p^n$. A bent 
function $f(x)$ is called {\em regular} (see \cite[Definition~3]{KuScWe85} and \cite[p.~576]{Ho04_1}) if for every $b\in\fpn$ the 
normalized Walsh coefficient $p^{-n/2}S_f(b)$ is equal to a complex $p^{\rm{th}}$ root of unity, i.e., $p^{-n/2}S_f(b)=\omega^{f^*(b)}$ for 
some function $f^*$ mapping $\fpn$ into $\fp$. A bent function $f(x)$ is called {\em weakly regular} if there exists a complex $u$ having 
unit magnitude such that $up^{-n/2}S_f(b)=\omega^{f^*(b)}$ for all $b\in\fpn$. We call $u^{-1}p^{n/2}$ the {\em magnitude} of $S_f(b)$. 
Recently, weakly regular bent functions were shown to be useful for constructing certain combinatorial objects such as partial difference 
sets, strongly regular graphs and association schemes (see \cite{TaPoFe10,PoTaFeLi11}). This justifies why the classes of (weakly) regular 
bent functions are of independent interest. For a comprehensive reference on monomial and quadratic $p$-ary bent functions we refer reader 
to \cite{HeKh06_1} and to the general survey on $p$-ary bent functions \cite{Mei22}. 

\begin{definition}
Functions $f,g:\fp^n\mapsto\fp$ are \emph{extended-affine equivalent} (in brief, EA-equivalent) if there exists an affine permutation $L$ 
of $\fp^n$, an affine function $l:\fp^n\mapsto\fp$ and $a\in\fp^*$ such that $g(x)=a(f\circ L)(x)+l(x)$. A class of functions is 
\emph{complete} if it is a union of EA-equivalence classes. The \emph{completed class} is the smallest possible complete class that 
contains the original one. 
\end{definition}

It is interesting to find bent functions allowing few terms in their univariate representation. It looks like that the currently known list 
of bent monomials that give infinite classes is complete (see \cite{Mei22}). The only known binomial class is for $n=4k$ with 
$f(x)=\Tr_n\big(x^{p^{3k}+p^{2k}-p^k+1}+x^2\big)$ proven in \cite{HeKh10_4} to be a weakly regular bent function for any odd $p$. In this 
paper, we find other ternary binomial and trinomial classes having the form $f(x)=\Tr_n\left(a_1 x^{d_1}+a_2 x^{d_2}+a_3 x^{d_3}\right)$ 
with $a_i\in\fthreen$ and particular integer exponents $d_i$ for $i=\{1,2,3\}$. These functions belong to the completed Maiorana-McFarland 
class and we develop some new tools to prove bentness in this case. 

The paper is organized as follows. In Section~\ref{sec:MM_crit}, we prove new criteria for a function to be bent in the Maiorana-McFarland 
class. This result is interesting on its own since it provides a method for proving bentness that may work in the case when all the other 
approaches seem to fail. We use these criteria in the following sections. In Section~\ref{sec:MM_binom2}, we present ternary binomial bent 
function for $n=4k$ given by $f(x)=\Tr_n\big(a_1 x^{2(3^k+1)}+a_2 x^{(3^k+1)^2}\big)$, where $a_1$ is a nonsquare in $\fthreen$ and $a_2$ 
is defined explicitly by $a_1$. In Section~\ref{sec:MM_trinom}, we found bent trinomials $f(x)=\Tr_n\big(a_1 x^{2\cdot3^k+4} + a_2 
x^{3^k+5} + a_3 x^2\big)$ with $n=2k$, $k$ not divisible by four, and coefficients explicitly defined. Binomial bent functions considered 
in Section~\ref{sec:MM_binom1} are also Maiorana-McFarland but the proof is provided using a different technique that works in general and 
this is just a particular case that we ended up in the Maiorana-McFarland class. For small dimensions, we checked computationally that 
these functions are EA equivalent to the functions in a more general class from Section~\ref{sec:MM_binom2} but proving this in general 
remains open. The final Section~\ref{sec:MM_excep}, we discuss the subclass of the earlier found binomial bent functions that can be 
represented by exceptional polynomials. 

\section{Proving Bentness in Maiorana-McFarland Case}
 \label{sec:MM_crit}
In this section, we discuss how to prove bentness of a function that will belong to the completed Maiorana-McFarland class. Normally, one 
would first check that a function is bent and then apply some criteria of being in the completed class of $\mathcal{M}$. Our result in 
Theorem~\ref{th:MM_crit2} provides a new direct method of proving bentness specially customized to this case. Commonly, Maiorana-McFarland 
bent functions are represented in bivariate form. 

\begin{construction}(Maiorana-McFarland \cite{Di74,McFa73,KuScWe85})
Take any permutation $\pi$ of $\fp^m$ and any function $g:\fp^m\to\fp$. Then $f:\fp^m\times\fp^m\to\fp$ with $f(x,y):=x\cdot\pi(y)+g(y)$ is 
a bent function. Moreover, the bijectiveness of $\pi$ is necessary and sufficient for $f$ being bent. Such bent functions are regular and 
the dual function is equal to $f^*(x,y)=-y\cdot\pi^{-1}(x)+g(\pi^{-1}(x))$ that is also a Maiorana-McFarland bent function. Here dot 
function is the inner product of $\fp^m$. If $\fp^m$ is represented by a finite field $\fpm$ then we use the standard inner product that is 
the absolute trace function $\Tr_m()$. 

Also, for any $G:\fpm\mapsto\fpm$, function $F:\fpm\times\fpm\mapsto\fpm$ with $F(x,y)=x\pi(y)+G(y)$ is a vectorial bent function if and 
only if $\pi$ is a permutation of $\fpm$. All componet functions of $F$ are regular.
\end{construction}

The completed class of $\mathcal{M}$ contains all the quadratic bent functions.

For a bent function, the criterion to be a member of the completed class of $\mathcal{M}$ is known in its binary version when $p=2$ (see 
\cite{Di74}). In the general case, the proof is similar and we skip it here. 

\begin{proposition}
 \label{pr:MM_crit1}
Take even $n$ and a bent function $f:\fpn\mapsto\fp$. Then $f$ belongs to the completed class of $\mathcal{M}$ if and only if there exists 
an $n/2$-dimensional vector subspace $V$ in $\fpn$ such that the second-order derivatives 
\[D_{c,d}f(x)=f(x+c+d)-f(x+c)-f(x+d)+f(x)\]
vanish for any $c,d\in V$.
\end{proposition}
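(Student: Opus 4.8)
The plan is to prove both directions by converting the hypothesis ``all second-order derivatives vanish along $V$'' into the equivalent statement ``$f$ restricted to every coset of $V$ is affine'', and then recognising the latter as exactly the bivariate Maiorana--McFarland shape after a linear change of coordinates. For the ``if'' direction, assume $V$ has dimension $n/2$ and $D_{c,d}f\equiv 0$ for all $c,d\in V$. Since $D_{c,d}f(x)=D_cf(x+d)-D_cf(x)$ with $D_cf(x)=f(x+c)-f(x)$, the hypothesis says precisely that for every fixed $c\in V$ the derivative $D_cf$ is constant on each coset of $V$. Evaluating this on the coset $w+V$ I would deduce $f(w+v+c)=f(w+v)+f(w+c)-f(w)$ for all $v,c\in V$, so that $v\mapsto f(w+v)-f(w)$ is additive, hence $\fp$-linear, on $V$. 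Now fix a complementary subspace $W$ with $\fpn=V\oplus W$ (so $\dim W=n/2$ as well), choose a linear bijection $\fpn\to\fp^m\times\fp^m$ with $m=n/2$ that sends $V$ onto $\fp^m\times\{0\}$ and $W$ onto $\{0\}\times\fp^m$, and represent each linear functional on the first factor as an inner product. In these coordinates $f$ takes exactly the form $f(v,w)=v\cdot\pi(w)+g(w)$ of the Construction, for some map $\pi:\fp^m\to\fp^m$ and some $g:\fp^m\to\fp$. Because $f$ is bent, the Construction forces $\pi$ to be a permutation, so in the new coordinates $f$ is a genuine Maiorana--McFarland function; since the coordinate change is an affine permutation of $\fp^n$, the original $f$ is EA-equivalent to a member of $\mathcal{M}$ and therefore lies in the completed class.

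For the ``only if'' direction, suppose $f$ belongs to the completed class of $\mathcal{M}$, so by definition $f=a\,(h\circ L)+\ell$ with $h\in\mathcal{M}$, $L$ an affine permutation of $\fp^n$ with linear part $L_0$, $\ell$ affine, and $a\in\fp^*$. A one-line computation using $L(x+c)=L(x)+L_0(c)$ together with $D_{c,d}\ell\equiv 0$ gives $D_{c,d}f(x)=a\,D_{L_0c,L_0d}h\bigl(L(x)\bigr)$. Hence it suffices to exhibit the desired subspace for $h$ alone and then pull it back through $L_0$, which preserves dimension. Writing $h(x,y)=x\cdot\pi(y)+g(y)$ on $\fp^m\times\fp^m$ and taking $U=\fp^m\times\{0\}$, the derivative $D_{c',d'}h$ for $c',d'\in U$ vanishes at once, since both differencing directions move only the first argument, on which $h$ depends linearly. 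Then $V=L_0^{-1}(U)$ is an $n/2$-dimensional subspace on which all second-order derivatives of $f$ vanish.

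The argument is structural rather than computational, so I do not anticipate a serious obstacle; the two places that call for care are, first, making sure the ``affine on cosets'' property yields a \emph{homogeneous} linear part on each coset (so that the inner-product representation delivers the term $v\cdot\pi(w)$ rather than a general affine function of $v$), which is exactly where additivity of $\fp$-valued maps on an $\fp$-vector space is used, and second, keeping the bookkeeping in the forward direction tidy so that the scalar $a\in\fp^*$ and the affine summand $\ell$---the features separating the completed class from $\mathcal{M}$ itself---are visibly irrelevant to second-order derivatives; both are dispatched by the identity $D_{c,d}f(x)=a\,D_{L_0c,L_0d}h\bigl(L(x)\bigr)$ combined with $D_{c,d}\ell\equiv 0$.
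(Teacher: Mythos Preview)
Your argument is correct. Note, however, that the paper does not actually supply a proof of this proposition: it states that the criterion is known in the binary case (Dillon) and that ``in the general case, the proof is similar and we skip it here.'' So there is no in-paper proof to compare against; what you have written is precisely the standard argument one would expect such a reference to be hiding, and it goes through for arbitrary $p$ without change. The two points you flagged as needing care are handled correctly: additivity of $v\mapsto f(w+v)-f(w)$ on $V$ follows from the vanishing second derivatives exactly as you wrote, and for the converse direction your identity $D_{c,d}f(x)=a\,D_{L_0c,L_0d}h(L(x))$ together with $D_{c,d}\ell\equiv 0$ disposes of the EA-equivalence data cleanly. The appeal to the Construction (``bijectiveness of $\pi$ is necessary and sufficient for $f$ being bent'') to conclude that $\pi$ is a permutation from the bentness of $f$ is also legitimate and is the step where the hypothesis that $f$ is bent is actually used.
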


The following theorem and proposition provide a new method of proving bentness. As illustrated in Sections~\ref{sec:MM_binom2} and 
\ref{sec:MM_trinom}, this method can success in the case when other methods look hopeless. It requires calculation of first- and 
second-order derivatives of a function. This may be rather technical but still doable while calculating exponential sums directly seems 
impossible. 

\begin{theorem}
 \label{th:MM_crit2}
Take even $n$ and function $f:\fpn\mapsto\fp$. Assume there exists an $n/2$-dimensional vector subspace $V$ in $\fpn$ such that its 
orthogonal subspace $V^{\bot}$ is supplementary to $V$, the second-order derivatives $D_{c,d}f(x)$ vanish for any $c,d\in V$ and the 
first-order derivatives $D_c f(x)$ are balanced on $\fpn$ for any $c\in V\setminus\{0\}$. Then $f$ is a regular bent function that belongs 
to the completed class of $\mathcal{M}$. 
\end{theorem}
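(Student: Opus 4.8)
The plan is to show directly that $f$ is equivalent, after a linear change of coordinates adapted to the decomposition $\fpn = V \oplus V^{\bot}$, to a Maiorana-McFarland bent function in bivariate form $x\cdot\pi(y)+g(y)$. Fix a basis of $V$ and a basis of $V^{\bot}$; since $V^{\bot}$ is supplementary to $V$, every element of $\fpn$ is uniquely $v+w$ with $v\in V$, $w\in V^{\bot}$, and we may write $f(v+w)$ as a function $h(v,w)$ on $\fp^{n/2}\times\fp^{n/2}$. The hypothesis that $D_{c,d}f$ vanishes for all $c,d\in V$ says precisely that, for each fixed $w$, the map $v\mapsto h(v,w)$ is affine in $v$; hence there is a map $\phi:\fp^{n/2}\to\fp^{n/2}$ and a function $g_0:\fp^{n/2}\to\fp$ with $h(v,w)=\langle \phi(w),v\rangle + g_0(w)$, where $\langle\cdot,\cdot\rangle$ is a suitable inner product on the $v$-coordinates (coming from $\Tr_n$ restricted appropriately, using that $V^{\bot}$ is the true orthogonal complement so the pairing between $V$ and $V^{\bot}$ is nondegenerate).

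Next I would identify $\phi$ with the first-order derivative data. For $c\in V\setminus\{0\}$ one computes $D_c f(v+w) = h(v+c,w)-h(v,w) = \langle \phi(w),c\rangle$, which depends only on $w$. The assumption that $D_c f$ is balanced on $\fpn$ therefore forces $w\mapsto \langle\phi(w),c\rangle$ to be balanced on $\fp^{n/2}$ for every $c\neq 0$ in $V$. Since the pairing of $V$ with $V^{\bot}$ is nondegenerate, this says that every nonzero linear functional composed with $\phi$ is balanced, i.e. every component function of $\phi$ (in the dual coordinates) is balanced — which is exactly the statement that $\phi$ is a permutation of $\fp^{n/2}$. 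With $\phi$ a permutation, $h(v,w)=\langle\phi(w),v\rangle+g_0(w)$ is literally a function in the Maiorana-McFarland Construction above (after renaming $\pi:=\phi$, $g:=g_0$), hence bent and regular; and bentness/regularity are preserved by the linear coordinate change, so $f$ itself is regular bent.

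Finally, membership in the completed class of $\mathcal{M}$ is immediate: the subspace $V$ we started from witnesses the criterion in Proposition~\ref{pr:MM_crit1}, since by construction $D_{c,d}f$ vanishes on $V$ and $f$ has now been shown to be bent. The main obstacle I anticipate is bookkeeping with the inner products: one must be careful that the pairing used to write the "$x\cdot\pi(y)$" part is genuinely nondegenerate on $V\times V^{\bot}$ and matches the absolute trace form appearing in the Maiorana-McFarland construction, so that the derived $h(v,w)$ really is of the required bilinear-plus-lower-order shape rather than merely affine-in-$v$ with an uncontrolled linear part. Once the orthogonality hypothesis on $V^{\bot}$ is used to pin down this pairing, the rest is a direct translation; no exponential-sum estimate is needed, only the fact that a Maiorana-McFarland function with a genuine permutation is bent, which is quoted in the Construction.
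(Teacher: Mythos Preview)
Your proposal is correct and follows essentially the same route as the paper: decompose $\fpn = V \oplus V^{\bot}$, use the vanishing second-order derivatives to write $f(v+w)=v\cdot\pi(w)+g(w)$ with $\pi:V^{\bot}\to V$, then use balancedness of $D_c f(x)=c\cdot\pi(w)$ for all nonzero $c\in V$ to conclude that $\pi$ is a bijection and hence $f$ is regular Maiorana-McFarland bent, finishing with Proposition~\ref{pr:MM_crit1}. The only cosmetic differences are that the paper spells out the character-counting argument for the bijectivity of $\pi$ (your ``balanced component functions $\Rightarrow$ permutation'' principle) and computes the Walsh transform explicitly rather than citing the Construction, but the underlying argument is identical.
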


\begin{proof}
Write the elements $x\in\fpn$ uniquely as $x=v+w$ with $v\in V$ and $w\in V^{\bot}$. Since $D_{c,d}f(x)$ vanish for any $c,d\in V$, then 
$D_c f(x)$ is constant when $w$ is fixed and thus, $f(x)$ is affine when $w$ is fixed. This means that there exist functions 
$\pi:V^{\bot}\mapsto\fpn$ and $g:V^{\bot}\mapsto\fp$ such that $f(x)=v\cdot\pi(w)+g(w)$. Since $V$ and $V^{\bot}$ are orthogonal, we may 
assume that $\pi:V^{\bot}\mapsto V$. In the univariate form, we can also write that $f(x)=\phi(x)\cdot\pi(x)+g(x)$, where functions $\pi()$ 
and $g()$ (with the domain extended to $\fpn$) are invariant by translation of their input by an element of $V$ and $\phi()$ is linear with 
kernel $V^{\bot}$. 

Now $D_c f(x)=c\cdot\pi(w)$ is balanced on $\fpn$ for any nonzero $c\in V$ and so $c\cdot\pi(w)$ is balanced on $V^{\bot}$. Define a set of 
characters of $V$ (a finite abelian group) as $\chi_c(v)=\omega^{c\cdot v}$ for every $c\in V$. Note that $\chi_c$ is a trivial character 
if and only if $c\in V^{\bot}$ and since $V^{\bot}$ is supplementary to $V$, then $c=0$. Thus, if $c,d\in V$ with $c\ne d$ then 
$\chi_c(v)/\chi_d(v)=\chi_{c-d}(v)$ is trivial if and only if $c=d$. Therefore, all characters of $V$ are defined this way. Take any $v\in 
V$ and denote $N=\#\{w\in V^{\bot}\ |\ \pi(w)=v\}$. Then, using \cite[Eq.~(5.4)]{LiNi97}, 
\[N=\frac{1}{|V|}\sum_{w\in V^{\bot}}\sum_{\chi\in V^\wedge}\chi(\pi(w))\overline{\chi(v)}=
1+\frac{1}{|V|}\sum_{c\in V\setminus\{0\}}\overline{\chi_c(v)}\sum_{w\in V^{\bot}}\omega^{c\cdot\pi(w)}=1\enspace,\] where $V^\wedge$ is 
the group of characters of $V$ and since $|V|=|V^{\bot}|$. We conclude that $\pi$ is a bijective mapping of $V^{\bot}$ onto $V$. 

Now the Walsh transform of $f$ in point $(b_1,b_2)\in V\times V^{\bot}$ is equal to 
\begin{align*}
S_f(b_1,b_2)&=\sum_{v\in V,\,w\in V^{\bot}}\omega^{v\cdot\pi(w)+g(w)-b_1\cdot v-b_2\cdot w}\\
&=\sum_{w\in V^{\bot}}\omega^{g(w)-b_2\cdot w}\sum_{v\in V}\omega^{v\cdot(\pi(w)-b_1)}\\
&=p^{n/2}\omega^{g(\pi^{-1}(b_1))-b_2\cdot\pi^{-1}(b_1)}
\end{align*}
and $f$ is a regular bent function that, by Proposition~\ref{pr:MM_crit1}, belongs to the completed class of $\mathcal{M}$.\qed 
\end{proof}

On the other hand, if we have a bent function $f$ in the completed class of $\mathcal{M}$ then, by Proposition~\ref{pr:MM_crit1} and the 
arguments above, $f(v+w)=v\cdot\pi(w)+g(w)$, where $v\in V$, an $n/2$-dimensional vector subspace in $\fpn$, and $w$ is from a subspace 
supplementary to $V$ (denote it $W$), and $\pi:W\mapsto\fpn$ is such that $\#\{w\in W\ |\ \pi(w)-v\in V^{\bot}\}=1$ for any $v\in V$. It is 
not possible to conclude if the first-order derivatives $D_c f(x)=c\cdot\pi(w)$ here are balanced that is conditioned in 
Theorem~\ref{th:MM_crit2}.

The following proposition can handle the case when vector space $V$ is self-orthogonal so Theorem~\ref{th:MM_crit2} is not applicable.

\begin{proposition}
 \label{pr:MM_crit3}
Take $n=2m$ and function $f:\fpn\mapsto\fp$. Assume there exists an $m$-dimensional self-orthogonal vector subspace $V$ in $\fpn$ such that 
for any $c,d\in V$, the second-order derivatives $D_{c,d}f(x)$ vanish. Take an $m$-dimensional subspace $W$ supplementary to $V$. Then, 
$f(v+w)=v\cdot\pi(w)+g(w)$ for some $\pi:W\mapsto\fpn$, $g:W\mapsto\fp$ and with $v\in V$. If $W$ can be chosen such that $\pi$ is a 
permutation of $W$ then $f$ is a regular bent function that belongs to the completed class of $\mathcal{M}$. Note that the first-order 
derivatives $D_c f(v+w)=c\cdot\pi(w)$. 
\end{proposition}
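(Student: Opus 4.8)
The plan is to mimic the structure of the proof of Theorem~\ref{th:MM_crit2}, adapting each step to the weaker hypotheses. First I would start exactly as before: since the second-order derivatives $D_{c,d}f(x)$ vanish for all $c,d\in V$, for each fixed value of the $W$-component the map $v\mapsto f(v+w)$ is affine on $V$, so there exist $\pi:W\mapsto\fpn$ and $g:W\mapsto\fp$ with $f(v+w)=v\cdot\pi(w)+g(w)$ for $v\in V$, $w\in W$. Unlike the previous theorem, here $V$ is self-orthogonal rather than having a supplementary orthogonal complement, so I cannot assume $\pi$ takes values in $V$; it genuinely maps into $\fpn$. The first-order derivative computation $D_c f(v+w)=c\cdot\bigl(\pi(w)\bigr)-$ wait, more carefully: $D_c f(v+w)=f(v+c+w)-f(v+w)=(v+c)\cdot\pi(w)+g(w)-v\cdot\pi(w)-g(w)=c\cdot\pi(w)$ for $c\in V$, which depends only on $w$, as stated.

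Next I would compute the Walsh transform directly, writing any $b\in\fpn$ in coordinates relative to the decomposition $\fpn=V\oplus W$. The key device, as in the proof of Theorem~\ref{th:MM_crit2}, is the inner sum over $v\in V$: since $V$ is self-orthogonal, $v\cdot v'=0$ for all $v,v'\in V$, but the characters $v\mapsto\omega^{v\cdot z}$ of $V$ for $z$ ranging over $\fpn$ realize all characters of $V$ (because $V\cap V^\bot=V$ has a complement and the pairing $V\times(\fpn/V^\bot)\to\fp$ is nondegenerate), so $\sum_{v\in V}\omega^{v\cdot(\pi(w)-z)}$ equals $|V|=p^m$ when $\pi(w)-z\in V^\bot$ and $0$ otherwise. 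Using the hypothesis that $\pi$ is a permutation of $W$: I would want to argue that this forces, for any target value prescribed modulo $V^\bot$, a unique $w$; concretely, since $\#\{w\in W\mid \pi(w)-v\in V^\bot\}$ should equal $1$, I would either derive this from $\pi$ being a bijection $W\to W$ together with the fact that $W$ is a transversal of $V^\bot$ (this needs $W\oplus V^\bot=\fpn$, which follows from $W\oplus V=\fpn$ and $V\subseteq V^\bot$ only if $\dim V^\bot=m$, i.e.\ $V=V^\bot$), or simply invoke the character-counting identity $N=\frac1{|V|}\sum_{w\in W}\sum_{\chi}\chi(\pi(w))\overline{\chi(z)}$ as in the earlier proof. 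Then the Walsh sum telescopes to $\pm p^m\cdot\omega^{(\text{something})}$, giving $|S_f(b)|^2=p^n$, regularity, and membership in the completed $\mathcal M$ by Proposition~\ref{pr:MM_crit1}.

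The main obstacle I anticipate is the bookkeeping around the self-orthogonality of $V$ and the choice of $W$: when $V=V^\bot$ one has $\dim V=m=n/2$ and $V^\bot=V$, so $W$ supplementary to $V$ is automatically supplementary to $V^\bot$, and the pairing between $W$ and $V$ is nondegenerate — this is what makes $v\mapsto\omega^{v\cdot w}$, $w\in W$, run over all characters of $V$, and it is exactly what lets ``$\pi$ a permutation of $W$'' translate into ``$\#\{w\mid\pi(w)-v\in V^\bot\}=1$''. I would need to state the identification $W\cong\widehat V$ cleanly and make sure the character sum over $V$ is evaluated correctly given that the ``obvious'' characters indexed by $v'\in V$ are all trivial. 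Once that identification is pinned down, the Walsh computation is essentially a transcription of the one in Theorem~\ref{th:MM_crit2} with $V^\bot$ replaced by $W$ and the single-solution property of $\pi$ used in place of bijectivity onto $V$; I would expect no further surprises, and the final line simply records the conclusion via Proposition~\ref{pr:MM_crit1}.
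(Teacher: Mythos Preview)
Your proposal is correct and follows essentially the same approach as the paper: compute the Walsh transform directly, use self-orthogonality of $V$ to drop the $b_1\cdot v$ term (writing $b=b_1+b_2\in V\oplus W$), and then observe that the inner sum $\sum_{v\in V}\omega^{v\cdot(\pi(w)-b_2)}$ vanishes unless $\pi(w)-b_2\in V^\bot=V$, which---since $\pi(w),b_2\in W$ and $W\cap V=\{0\}$---forces $w=\pi^{-1}(b_2)$. The paper's proof is simply the four-line version of this computation; the character-theoretic detour and counting argument you sketch are unnecessary once you use that $\pi$ already takes values in $W$ by hypothesis, so your anticipated ``main obstacle'' dissolves immediately.
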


\begin{proof} 
The Walsh transform of $f$ in point $(b_1,b_2)\in V\times W$ is equal to 
\begin{align*}
S_f(b_1+b_2)&=\sum_{v\in V,\,w\in W}\omega^{v\cdot\pi(w)+g(w)-(b_1+b_2)\cdot(v+w)}\\
&=\sum_{v\in V,\,w\in W}\omega^{v\cdot\pi(w)+g(w)-(b_1+b_2)\cdot w-b_2\cdot v}\\
&=\sum_{w\in W}\omega^{g(w)-(b_1+b_2)\cdot w}\sum_{v\in V}\omega^{v\cdot(\pi(w)-b_2)}\\
&=p^{n/2}\omega^{g(\pi^{-1}(b_2))-(b_1+b_2)\cdot\pi^{-1}(b_2)}
\end{align*}
and $f$ is a regular bent function that, by Proposition~\ref{pr:MM_crit1}, belongs to the completed class of $\mathcal{M}$.\qed 
\end{proof}

%Thus, the first-order derivatives $D_c f(x)=c\cdot\pi(w)$ are balanced on $\fpn$ for any $c\in V\setminus V^{\bot}$. If $V\cap V^{\bot}$ 
%contains nonzero elements then conditions of Theorem~\ref{th:MM_crit2} are not fully satisfied. 

\section{Binomial Bent Functions in the Completed Class of $\mathcal{M}$ (case $n=4k$ with $k$ odd)}
 \label{sec:MM_binom1}
In this section, we present a ternary binomial bent function in the completed  Maiorana-McFarland class and calculate its Walsh transform 
coefficients. In the proof, we use the technique that was originally suggested by Dobbertin in the binary case \cite{Do98} (see also 
\cite{Le06}). 

Let $n=4k$ with $k$ odd. Since polynomial $x^4+x-1$ is irreducible over $\fthree$ it is also irreducible over $\fthreek$ (see 
\cite[Lemma~?]{LiNi97}). Taking $a\in\F_{3^4}$ such that $a^4+a-1=0$ we obtain that $\fthreen=\fthreek[a]$. Note that $a$ is a primitive 
element of $\F_{3^4}$. In particular, any $x\in\fthreen$ can be uniquely represented as 
\begin{equation}
 \label{eq:x_exp}
x=x_3 a^3+x_2 a^2+x_1 a+x_0
\end{equation}
with $x_i\in\fthreek$. Using $a^4=1-a$, we can also calculate that
\begin{align*}
\Tr^n_k(a)&=a+a^{3^k}+a^{3^{2k}}+a^{3^{3k}}=a+a^3+a^9+a^{27}=0=\Tr^n_k(a^3)\\
\Tr^n_k(a^2)&=a^2+a^6+a^{18}-a^{14}=0\enspace.
\end{align*}
Knowing these first three traces and that $\Tr^n_k(1)=1$ we can calculate all $\Tr^n_k(a^i)\in\fthree$ for $i=0,\dots,79$. 

\begin{theorem}
 \label{th:binom1}
Let $n=4k$ with $k\equiv 3\pmod{4}$. Further, select $a\in\F_{3^4}$ such that $a^4+a-1=0$. Then ternary function $f(x)$ mapping $\fthreen$ 
to $\fthree$ and given by 
\[f(x)=\Tr_n\left(a x^{2(3^k+1)}+a^{-1}x^{(3^k+1)^2}\right)\]
is a regular bent function of degree four belonging to the completed Maiorana-McFarland class. 
\end{theorem}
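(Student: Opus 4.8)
The plan is to apply Proposition~\ref{pr:MM_crit3} with the self-orthogonal subspace $V=a^2\fthreek\oplus a^3\fthreek$ inside $\fthreen=\fthreek[a]$, which is natural because the stated traces give $\Tr^n_k(a^2)=\Tr^n_k(a^3)=0$, and one checks that $\Tr^n_k(a^2\cdot a^2)=\Tr^n_k(a^3\cdot a^3)=\Tr^n_k(a^2\cdot a^3)=0$ as well (these follow from the table of $\Tr^n_k(a^i)$), so $V$ is indeed self-orthogonal with respect to $\Tr_n(xy)=\Tr_k\big(\Tr^n_k(xy)\big)$. The complementary subspace will be $W=\fthreek\oplus a\fthreek$. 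Writing $x=v+w$ with $v=x_3a^3+x_2a^2\in V$ and $w=x_1a+x_0\in W$, the first task is to compute the two second-order derivatives $D_{c,d}f$ for $c,d\in V$ and show they vanish; since $\deg f=4$ this is a polynomial identity that should collapse once one expands $x^{2(3^k+1)}$ and $x^{(3^k+1)^2}$ using $v^{3^k}$, $w^{3^k}$ and the fact that both $V$ and $W$ are $\fthreek$-subspaces so Frobenius $x\mapsto x^{3^k}$ acts $\fthreek$-linearly on each coordinate. The choice of the coefficients $a$ and $a^{-1}$ is exactly what is needed to make the $v$-quadratic part disappear; this is the crux computation and I expect it to be the main obstacle.

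Concretely, I would expand
\[
f(v+w)=\Tr_n\Big(a\,(v+w)^{2(3^k+1)}+a^{-1}(v+w)^{(3^k+1)^2}\Big),
\]
collect the terms by their total degree in $v$ (degrees $0,1,2,3,4$), and use $D_{c,d}f\equiv 0$ as the requirement that the degree-$2$, degree-$3$ and degree-$4$ parts in $v$ all vanish identically. The degree-$4$ and degree-$3$ parts vanish for trivial reasons once one notes that $v^{2(3^k+1)}$ has $v$-degree $2$ and $v^{(3^k+1)^2}$ has $v$-degree $4$ but lies in a coset where the trace kills it (here the self-orthogonality $\Tr_n(v v')=0$ for $v,v'\in V$ is used repeatedly, together with $\Tr_n\big(a\,V^{\bot\text{-type products}}\big)=0$ coming from the trace table). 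The surviving purely $v$-linear part then defines $\pi:W\mapsto\fthreen$ via $f(v+w)=\Tr_n(v\cdot\pi(w))+g(w)$, i.e. $D_cf(v+w)=\Tr_n(c\,\pi(w))$ for $c\in V$, and collecting the $w$-only terms gives $g(w)$.

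It then remains to verify that $\pi$ (or rather its image modulo $V^{\bot}$) is a permutation of $W$, equivalently that for every $v_0\in V$ the system $\pi(w)-v_0\in V^{\bot}$ has a unique solution $w\in W$. After the expansion, $\pi(w)$ will be an explicit low-degree polynomial map in the two $\fthreek$-coordinates $(x_0,x_1)$ of $w$, essentially built from $w^{3^k+1}$ and its conjugates times powers of $a$; the permutation condition should reduce to a statement that a certain pair of polynomial equations over $\fthreek$ has a unique solution, and this is where the hypothesis $k\equiv 3\pmod 4$ enters — it guarantees that $-1$ is a nonsquare in $\fthreek$ (since $k$ is odd) and pins down the relevant quadratic-character conditions, so that a quadratic form appearing in $\pi$ is anisotropic and the map is injective, hence bijective by counting. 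Once $\pi$ is shown to be a permutation of $W$, Proposition~\ref{pr:MM_crit3} immediately yields that $f$ is a regular bent function in the completed Maiorana-McFarland class, and the degree-four claim is clear from the exponents $2(3^k+1)$ and $(3^k+1)^2$, whose base-$3$ digit sums are both $4$. The Walsh spectrum promised in the section preamble then reads off from the displayed formula $S_f(b_1+b_2)=p^{n/2}\omega^{g(\pi^{-1}(b_2))-(b_1+b_2)\cdot\pi^{-1}(b_2)}$ in Proposition~\ref{pr:MM_crit3}. The delicate points, in order of difficulty, are: (i) the cancellation of the $v$-quadratic term, which forces the ratio $a_2/a_1=a^{-2}$ and is where the specific $a$ with $a^4=1-a$ is used through its trace table; and (ii) the anisotropy/uniqueness argument for $\pi$, where the congruence on $k$ is needed.
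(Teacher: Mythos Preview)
Your approach has a concrete computational error that derails it at the outset. You claim $\Tr^n_k(a^2\cdot a^2)=0$, but from $a^4=1-a$ one has $\Tr^n_k(a^4)=\Tr^n_k(1)-\Tr^n_k(a)=1-0=1$. Hence $V=a^2\fthreek\oplus a^3\fthreek$ is \emph{not} self-orthogonal, and Proposition~\ref{pr:MM_crit3} does not apply to it. More seriously, this $V$ is not a subspace on which the second-order derivatives of $f$ vanish either: in the four-variable expansion (\ref{eq:4var_rep}) there are many monomials of degree $\geq 2$ in $(x_2,x_3)$ (for instance $-x_2^4$, $x_0^2x_3^2$, $x_1^2x_3^2$, $-x_3^4$), so $f$ is far from affine on cosets of your $V$, and the expected ``crux cancellation'' simply does not happen.

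The paper takes an entirely different route and does not invoke the derivative criteria of Section~\ref{sec:MM_crit} here at all. It follows Dobbertin's technique: after writing $f$ as the explicit four-variable polynomial (\ref{eq:4var_rep}) over $\fthreek$, a single affine substitution $x_0\mapsto x_0+x_2$ renders the expression linear in $x_2$, so the Walsh sum factors as an inner sum over $x_2$ (which forces a cubic constraint $g(x_0,x_1,x_3)=0$) times an outer sum. That constraint is then solved explicitly---this is where oddness of $k$ enters, via the nonsquareness of $-1$ in $\fthreek$---and the Walsh coefficients are computed in closed form in each case. Membership in the completed $\mathcal{M}$ class is established a posteriori by exhibiting an EA-equivalent representative that is linear in $(x_0,x_2)$. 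If you want to salvage the derivative approach you would first need to locate the correct $2k$-dimensional subspace on which the second derivatives vanish (it is not the one you chose), and even then the verification that $\pi$ is a permutation would essentially reproduce the same polynomial analysis over $\fthreek$ that the paper carries out directly on the Walsh sum.
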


\begin{proof}
To begin with, using (\ref{eq:x_exp}) we convert $f(x)$ in its univariate representation into the polynomial of four variables over 
$\fthreek$. Note that for $k\equiv 3\pmod{4}$ we have that $3^k\equiv 27\pmod{80}$ and $3^{2k}\equiv 9\pmod{80}$. Then 
\begin{eqnarray}
 \label{eq:4var_rep}
\nonumber f(x)&=&\Tr_n\left(x^{2\cdot 3^k+1}\big(ax+a^{-1}x^{3^{2k}}\big)\right)\\
\nonumber&=&\Tr_n\Big((x_3 a+x_2 a^{54}+x_1 a^{27}+x_0)^2(x_3 a^3+x_2 a^2+x_1 a+x_0)\\
\nonumber&&\ \times\big(x_3 a^4+x_2 a^3+x_1 a^2+x_0 a+a^{79}(x_3 a^{27}+x_2 a^{18}+x_1 a^9+x_0)\big)\Big)\\
&=&\Tr_k\Big(x_0^4+x_0 x_1^3+x_0^3 x_2+x_1^3 x_2-x_0 x_2^3-x_2^4-x_0^3 x_3\\
\nonumber&&\ +x_0^2 x_1 x_3-x_0 x_1^2 x_3-x_0^2 x_2 x_3-x_0 x_1 x_2 x_3-x_0 x_2^2 x_3+x_0^2 x_3^2\\
\nonumber&&\ -x_0 x_1 x_3^2+x_1^2 x_3^2-x_0 x_2 x_3^2-x_1 x_2 x_3^2+x_1 x_3^3-x_3^4\Big)\enspace.
\end{eqnarray}
Also take any $b\in\fthreen$ and expand it as $b=b_3 a^3+b_2 a^2+b_1 a+b_0$ with $b_i\in\fthreek$. Now replacing $x_0$ with $x_0+x_2$ we 
obtain 
\begin{align*}
f(x')+\Tr_n(bx')&=\Tr_k\Big(x_2\big((x_0+x_1)x_3^2-(x_0+x_1)^2 x_3-(x_0+x_1)^3+b_0+b_2\big)\\
&\ +x_0(x_0+x_1)^3-x_3(x_0-x_3)(x_0+x_1)^2+x_1 x_3^3-x_3^4\\
&\ +x_1 b_3+x_3 b_1+x_0 b_0\Big)\\
&=\Tr_k(x_2 g(x_0,x_1,x_3)+h(x_0,x_1,x_3))\enspace,
\end{align*}
where $x'=x_3 a^3+x_2 a^2+x_1 a+x_0+x_2$ and functions $g()$ and $h()$ are defined explicitly. 

Thus, computing the Walsh transform in point $b$ we obtain
\begin{eqnarray}
 \label{eq:W_tr}
\nonumber S_f(b)&=&\sum_{x\in\fthreen}\omega^{f(x)+\Tr_n(bx)}\\
\nonumber&=&\sum_{x_0,x_1,x_2,x_3\in\fthreek}\omega^{\Tr_k(x_2 g(x_0,x_1,x_3)+h(x_0,x_1,x_3))}\\
&=&3^k\sum_{{x_0,x_1,x_3\in\fthreek\atop g(x_0,x_1,x_3)=0}\atop x_0+x_1=0}\omega^{\Tr_k(h(x_0,x_1,x_3))}
+3^k\sum_{{x_0,x_1,x_3\in\fthreek\atop g(x_0,x_1,x_3)=0}\atop x_0+x_1\neq 0}\omega^{\Tr_k(h(x_0,x_1,x_3))}
\end{eqnarray}

Firstly, consider the case when $b_0+b_2=0$. Under this condition, $g(x_0,x_1,x_3)=0$ if and only if $x_0+x_1=0$ and, thus, the second sum 
in (\ref{eq:W_tr}) is zero. Indeed, assuming that $x_0+x_1\neq 0$ we obtain 
\[g(x_0,x_1,x_3)=(x_0+x_1)^3\left(\left(\frac{x_3}{x_0+x_1}\right)^2-\frac{x_3}{x_0+x_1}-1\right)\]
and $g(x_0,x_1,x_3)=0$ has solutions if and only if the discriminant equal to $-1$ is a square in $\fthreek$ which is false since $k$ is 
odd. Therefore, 
\begin{align*}
S_f(b)&=3^k\sum_{x_0,x_1,x_3\in\fthreek\atop x_0+x_1=0}\omega^{\Tr_k(h(x_0,x_1,x_3))}\\
&=3^k\sum_{x_1,x_3\in\fthreek}\omega^{\Tr_k(h(-x_1,x_1,x_3))}\\
&=3^k\sum_{x_1,x_3\in\fthreek}\omega^{\Tr_k(x_1 x_3^3-x_3^4+x_1 b_3+x_3 b_1-x_1 b_0)}\\
&=3^k\sum_{x_3\in\fthreek}\omega^{\Tr_k(-x_3^4+x_3 b_1)}\sum_{x_1\in\fthreek}\omega^{\Tr_k(x_1(x_3^3+b_3-b_0))}\\
&=3^{2k}\sum_{x_3\in\fthreek\atop x_3^3=b_0-b_3}\omega^{\Tr_k(-x_3^4+x_3 b_1)}\\
&=3^{2k}\omega^{\Tr_k\left((b_0-b_3)^{3^{k-1}}(-b_0+b_1+b_3)\right)}\enspace.
\end{align*}

Secondly, consider the case when $b_0+b_2\neq 0$. Under this condition, if $x_0+x_1=0$ then $g(x_0,x_1,x_3)=b_0+b_2\neq 0$ and, thus, the 
first sum in (\ref{eq:W_tr}) is zero. We need to study solutions of $g(x_0,x_1,x_3)=0$ when $x_0+x_1\neq 0$. In this case, 
\begin{equation}
 \label{eq:g}
g(x_0,x_1,x_3)=(x_0+x_1)^3\left(\left(\frac{x_3}{x_0+x_1}\right)^2-\frac{x_3}{x_0+x_1}-1+\frac{b_0+b_2}{(x_0+x_1)^3}\right)
\end{equation}
and $g(x_0,x_1,x_3)=0$ has solutions if and only if the discriminant (denoted by $t$) and equal to $-1-\frac{b_0+b_2}{(x_0+x_1)^3}$ is a 
square in $\fthreek$. The latter identity gives that $(x_0+x_1)^3=-\frac{b_0+b_2}{t+1}$ (note that $b_0+b_2\neq 0$ as assumed and $t\neq 
-1$ since $k$ is odd). Now change the parameters that we calculate the sum over. We chose $x_0\in\fthreek$ freely and allow $t$ to run over 
squares in $\fthreek$ (including zero). Also, from (\ref{eq:g}) we obtain that if $g(x_0,x_1,x_3)=0$ then 
\[x_3=(x_0+x_1)(-1\pm\sqrt{t})=-\left(\frac{b_0+b_2}{t+1}\right)^{3^{k-1}}(-1\pm\sqrt{t})\enspace.\]
Additionally, replace $x_1$ with $x_1-x_0$. Now $x_1^3=-(b_0+b_2)/(t+1)$ and every $t$ uniquely defines $x_1$ (also 
$x_3=x_1(-1\pm\sqrt{t})$). With these new parameters, 
\begin{align*}
&h(x_0,x_1-x_0,x_3)\\
&=x_0 x_1^3-x_3(x_0-x_3)x_1^2+(x_1-x_0)x_3^3-x_3^4+(x_1-x_0)b_3+x_3 b_1+x_0 b_0\\
&=x_0\big(x_1^3-x_3 x_1^2-x_3^3-b_3+b_0\big)+x_3^2 x_1^2+x_1 x_3^3-x_3^4+x_1 b_3+x_3 b_1\\
&=x_0\big(x_1^3-(-1\pm\sqrt{t})x_1^3-x_1^3(-1\pm\sqrt{t})^3-b_3+b_0\big)+x_3^2 x_1^2+x_1 x_3^3-x_3^4+x_1 b_3+x_3 b_1\\
&=x_0\big(-x_1^3(1+t)(\pm\sqrt{t})-b_3+b_0\big)+x_3^2 x_1^2+x_1 x_3^3-x_3^4+x_1 b_3+x_3 b_1\\
&=x_0\big(\pm(b_0+b_2)\sqrt{t}-b_3+b_0\big)+x_3^2 x_1^2+x_1 x_3^3-x_3^4+x_1 b_3+x_3 b_1\enspace.
\end{align*}

Denote $\fthreek(\Box)$ the set of squares in $\fthreek$ (including zero). In the following sum, it is assumed that plus sign is taken in 
place of $\pm$ when $x_3=x_1(-1+\sqrt{t})$ and minus if $x_3=x_1(-1-\sqrt{t})$. Now we are ready to calculate 
\begin{align*}
&S_f(b)=3^k\sum_{{x_0,x_1,x_3\in\fthreek\atop g(x_0,x_1,x_3)=0}\atop x_0+x_1\neq 0}\omega^{\Tr_k(h(x_0,x_1,x_3))}\\
&=3^k\sum_{t\in\fthreek(\Box)\atop x_3=x_1(-1\pm\sqrt{t})}\omega^{\Tr_k(x_3^2 x_1^2+x_1 x_3^3-x_3^4+x_1 b_3+x_3 b_1)}
\sum_{x_0\in\fthreek}\omega^{\Tr_k(x_0(\pm(b_0+b_2)\sqrt{t}-b_3+b_0))}\\
&=3^{2k}\omega^{\Tr_k(x_3^2 x_1^2+x_1 x_3^3-x_3^4+x_1 b_3+x_3 b_1)}\enspace,
\end{align*}
where $t=(b_3-b_0)^2/(b_0+b_2)^2$, $x_1=-\left(\frac{b_0+b_2}{t+1}\right)^{3^{k-1}}$ and $x_3=x_1(-1+(b_3-b_0)/(b_0+b_2))$. 

Since $f(x)$ is EA-equivalent to the function
\[\Tr_k\big(x_0(x_1^3-x_1^2 x_3-x_3^3)+x_2(x_1 x_3^2-x_1^2 x_3-x_1^3) +x_1^2 x_3^2+x_1 x_3^3-x_3^4\big)\]
it belongs to the completed Maiorana-McFarland class.\qed
\end{proof}

Note that in the proof of Theorem~\ref{th:binom1} it is only required that $k$ is odd ($k\equiv 1\pmod{4}$ is also accepted). This means 
that the $4$-variable function over $\fthreek$ defined in (\ref{eq:4var_rep}) is bent for any odd $k$. Remarkable that the underlying 
polynomial over $\fthreek$ does not depend on $k$ meaning that our bent function belongs the the rare class of exceptional polynomials. 

\section{Binomial Bent Functions in the Completed Class of $\mathcal{M}$ (general case $n=4k$)}
 \label{sec:MM_binom2}
In this section, we present a ternary binomial bent function in the completed  Maiorana-McFarland class. The proof is based on analyzing 
derivatives in the direction of elements taken from the particularly chosen $n/2$-dimensional subspace of $\fthreen$ and applying 
Theorem~\ref{th:MM_crit2}. We begin with a technical lemma that defines this subspace and reveals some relevant properties of its elements. 

\begin{lemma}
 \label{le:subsp1}
Let $n=4k$. Take any $a_1$ being a nonsquare in $\fthreen$ and define
\begin{equation}
 \label{eq:a2}
a_2=\pm I^k a_1^{(3^k+1)/2}\left((-1)^k a_1^{(3^k-1)(3^{2k}+1)/4}+a_1^{-(3^k-1)(3^{2k}+1)/4}\right)\enspace,
\end{equation}
where $I$ is a primitive $4$th root of unity in $\fthreen$ and the sign is arbitrary. There exists a unique nonzero multiplicative coset of 
$\F_{3^{2k}}$ that we denote $V$, such that for any $c\in V$, 
\begin{align}
 \label{eq:1}
\Tr^n_{2k}\big(a_2 c^{2\cdot 3^k}\big)&=0\\
 \label{eq:3}
c^2\big(a_1+a_1^{3^k}c^{2(3^{2k}-1)}+a_2 c^{3^{2k}-1}\big)&=0\enspace.
\end{align}
Moreover, such $c$ also satisfy
\begin{align}
 \label{eq:2}
\Tr^n_{2k}\big(a_1 c^{2(3^k+1)}\big)&=0\\
 \label{eq:4}
c^{3^k+1}\big(-a_1^{3^k}c^{3^{2k}-1}+a_2+a_2^{3^k}c^{(3^k+1)(3^{2k}-1)}\big)&=0\enspace.
\end{align}
Also, the orthogonal of $V$ is a subspace supplementary to $V$.
\end{lemma}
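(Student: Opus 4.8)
The plan is to collapse the four conditions~(\ref{eq:1})--(\ref{eq:4}) into a single equation in the quantity $u:=c^{3^{2k}-1}$. As $c$ runs through $\fthreen^*$, this $u$ runs through the group $\mu$ of $(3^{2k}+1)$-st roots of unity in $\fthreen$, and $u$ is constant on each coset $c\,\F_{3^{2k}}^*$; so fixing a nonzero multiplicative coset $V$ is the same as fixing a value $u\in\mu$. Using $\Tr^n_{2k}(z)=0\iff z^{3^{2k}}=-z$ together with $u^{3^{2k}}=u^{-1}$ on $\mu$, and cancelling the nonzero factors $c^2$ and $c^{3^k+1}$, the four conditions become, respectively, $u^{2\cdot 3^k}=-a_2^{1-3^{2k}}$; $u^{2(3^k+1)}=-a_1^{1-3^{2k}}$; the quadratic $a_1^{3^k}u^2+a_2u+a_1=0$; and $a_2^{3^k}u^{3^k+1}-a_1^{3^k}u+a_2=0$. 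Each is now coset-invariant, as it must be.

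I would then show that the definition~(\ref{eq:a2}) of $a_2$ is exactly what makes the quadratic split with precisely one root on $\mu$. Set $\beta:=a_1^{(3^k-1)(3^{2k}+1)/4}$. Since $(3^k-1)(3^k+1)(3^{2k}+1)=3^n-1$ and $a_1$ is a nonsquare, so that $a_1^{(3^n-1)/2}=-1$, one gets $\beta^{2(3^k+1)}=-1$ and $\beta^{3^{2k}-1}=(-1)^{(3^k-1)/2}=(-1)^k$. From the product $a_1^{1-3^k}$ of the roots and their sum $-a_2a_1^{-3^k}=\mp I^k a_1^{(1-3^k)/2}\bigl((-1)^k\beta+\beta^{-1}\bigr)$ (using~(\ref{eq:a2})), the roots are identified as $r_1=\mp(-1)^kI^ka_1^{(1-3^k)/2}\beta$ and $r_2=\mp I^ka_1^{(1-3^k)/2}\beta^{-1}$. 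A short calculation with these identities for $\beta$, using $3^{2k}+1\equiv 2\pmod 4$ to evaluate the powers of $I$, gives $r_1^{3^{2k}+1}=(-1)^k\beta^{3^{2k}-1}=1$, so $r_1\in\mu$, whereas $r_2^{3^{2k}+1}=\beta^{-4}\neq 1$ (else $\beta^{2(3^k+1)}=1$, a contradiction), so $r_2\notin\mu$. Since $r_1$ is a root of the quadratic lying in $\mu$, the coset $V$ with $c^{3^{2k}-1}=u_0:=r_1$ is the unique one on which~(\ref{eq:3}) holds.

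It then remains to check that~(\ref{eq:1}), (\ref{eq:2}) and~(\ref{eq:4}) also hold on $V$, which shows $V$ is the coset asserted by the lemma. For~(\ref{eq:2}), the relation $(1-3^k)(3^k+1)=1-3^{2k}$ and $\beta^{2(3^k+1)}=-1$ give $u_0^{2(3^k+1)}=-a_1^{1-3^{2k}}$ at once. For~(\ref{eq:4}), eliminating $a_2=-a_1u^{-1}-a_1^{3^k}u$ from the quadratic and using $3a_1^{3^k}u=0$ in characteristic three reduces $a_2^{3^k}u^{3^k+1}-a_1^{3^k}u+a_2$ to $-a_1u^{-1}\bigl(a_1^{3^{2k}-1}u^{2(3^k+1)}+1\bigr)$, which vanishes precisely by~(\ref{eq:2}). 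Only~(\ref{eq:1}) does not follow formally: substituting $u_0$ and~(\ref{eq:a2}) into $u^{2\cdot 3^k}=-a_2^{1-3^{2k}}$, and using $\bigl((-1)^k\beta+\beta^{-1}\bigr)^{3^{2k}-1}=(-1)^k$ (itself a consequence of $\beta^{3^{2k}}=(-1)^k\beta$), one is left to verify that the exponents of $a_1$ on the two sides differ by exactly $(3^n-1)/2$, after which the two sides agree because $a_1$ is a nonsquare.

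For the last assertion, write $V=\gamma\,\F_{3^{2k}}$ with $\gamma\neq 0$, so that $\gamma^{3^{2k}-1}=u_0$. A standard computation with the trace tower $\fthree\subset\F_{3^{2k}}\subset\fthreen$ gives $V^{\bot}=\gamma^{-1}\ker\Tr^n_{2k}$, hence $\dim_{\fthree}V^{\bot}=n-\dim_{\fthree}V=n/2$, and $V\cap V^{\bot}=\{0\}$ if and only if $\Tr^n_{2k}(\gamma^2)\neq 0$; moreover $\Tr^n_{2k}(\gamma^2)=0$ is equivalent to $\gamma^{2(3^{2k}-1)}=-1$, i.e.\ to $u_0^2=-1$. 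But $u_0\in\mu$ has order dividing $3^{2k}+1\equiv 2\pmod 4$, so $u_0$ cannot have order $4$ and $u_0^2\neq -1$. Therefore $\fthreen=V\oplus V^{\bot}$. I expect the main obstacle to be the exponent bookkeeping behind~(\ref{eq:1}) — producing the explicit roots $r_1,r_2$ and then carrying the factors $I^k$, $(-1)^k$ and the free sign of~(\ref{eq:a2}) correctly through the parity of $k$ — which is laborious but purely computational rather than conceptual.
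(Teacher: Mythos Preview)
Your proof is correct and takes a route that differs from the paper's. The paper starts from~(\ref{eq:1}), which it rewrites as $c^{2\cdot 3^k(3^{2k}-1)}=-a_2^{1-3^{2k}}$ to obtain two candidate cosets $c^{3^{2k}-1}=\pm I\,a_2^{3^k(3^{2k}-1)/2}$, and then plugs these into~(\ref{eq:3}) to show that exactly one sign makes the expression vanish. You invert the order: starting from the quadratic~(\ref{eq:3}) in $u=c^{3^{2k}-1}$, you write down both roots explicitly and show that exactly one lies in the norm-one group $\mu$, after which you verify~(\ref{eq:1}) by the exponent computation you sketch (the exponent difference indeed collapses to $(3^n-1)/2$, as claimed). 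Your deduction of~(\ref{eq:4}) --- eliminating $a_2$ via the quadratic and reducing to~(\ref{eq:2}) using $3a_1^{3^k}u=0$ --- is tidier than the paper's direct chain of substitutions. For the supplementary claim, the paper reduces $V\cap V^{\bot}=\{0\}$ to $a_2^{3^{2k}-1}\neq 1$ (true since $a_2$ is a nonsquare), whereas your reduction to $u_0^2\neq -1$ together with $|\mu|=3^{2k}+1\equiv 2\pmod 4$ is more transparent. Both approaches are purely computational; yours gains some structure by working uniformly in $u\in\mu$, while the paper's ordering handles the simpler condition~(\ref{eq:1}) first and postpones the heavier sign determination to~(\ref{eq:3}).
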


\begin{proof}
First, rewrite (\ref{eq:a2}) as
\begin{equation}
 \label{eq:m_a2}
a_2=\pm I^k a_1^{(3^k+1)/2} a_1^{-(3^k-1)(3^{2k}+1)/4}\big((-1)^k a_1^{(3^k-1)(3^{2k}+1)/2}+1\big)
\end{equation}
and note that the multiplicative term in brackets is a nonzero element of $\F_{3^{2k}}$ since $a_1$ is a nonsquare and 
$\big(a_1^{(3^{2k}+1)(3^k-1)/2}\big)^{3^k+1}=-1$. Therefore, 
\begin{align}
 \label{eq:mm_a2}
\nonumber a_2^{3^{2k}-1}&=a_1^{(3^k+1)(3^{2k}-1)/2} a_1^{-(3^n-1)(3^k-1)/4}\\
\nonumber &=a_1^{(3^k+1)(3^{2k}-1)/2} (-1)^{-(3^k-1)/2}\\
&=(-1)^k a_1^{(3^k+1)(3^{2k}-1)/2}\enspace.
\end{align}
It is easy to see that $a_2$ is also a nonsquare in $\fthreen$. Note that $c=0$ satisfies all four identities (\ref{eq:1})-(\ref{eq:4}) so 
we may further assume $c\neq 0$. 

Write (\ref{eq:1}) equivalently as
\begin{equation}
 \label{eq:m_2}
c^{2\cdot 3^k(3^{2k}-1)}=-a_2^{-(3^{2k}-1)}=\big(\alpha^{(3^{2k}+1)/2}a_2^{-1}\big)^{3^{2k}-1}\enspace,
\end{equation}
where $\alpha$ is a primitive element of $\fthreen$. Since $\gcd(2\cdot 3^k(3^{2k}-1),3^n-1)=2(3^{2k}-1)$ and $a_2$ (as well as $a_2^{-1}$) 
is a nonsquare, the latter equation has $2(3^{2k}-1)$ solutions in $\fthreen$ that fall into two multiplicative cosets of $\F_{3^{2k}}^*$. 
These cosets are defined by the following two identities 
\begin{equation}
 \label{eq:mm_2}
c^{3^{2k}-1}=\pm I a_2^{-3^{3k}(3^{2k}-1)/2}=\pm I a_2^{3^k(3^{2k}-1)/2}\enspace,
\end{equation}
where $I$ is a primitive $4$th root of unity in $\fthreen$.

Further, if (\ref{eq:1}) holds then
\begin{align*}
a_1^{3^{2k}-1} c^{2(3^k+1)(3^{2k}-1)}&\stackrel{(\ref{eq:m_2})}{=}a_1^{3^{2k}-1} a_2^{-3^{3k}(3^k+1)(3^{2k}-1)}\\
&=a_1^{3^{2k}-1} a_2^{(3^k-1)(3^{2k}-1)}\\
&\stackrel{(\ref{eq:mm_a2})}{=}a_1^{3^{2k}-1} a_1^{(3^{2k}-1)^2/2}\\
&=a_1^{(3^n-1)/2}=-1
\end{align*}
which means that (\ref{eq:2}) holds for every $c$ satisfying (\ref{eq:1}).

Also, if (\ref{eq:1}) holds then
\begin{align*}
&a_1+a_1^{3^k}c^{2(3^{2k}-1)}+a_2 c^{3^{2k}-1}\\
&\stackrel{(\ref{eq:mm_2})}{=}a_1-a_1^{3^k} a_2^{3^k (3^{2k}-1)}\pm I a_2 a_2^{3^k(3^{2k}-1)/2}\\
&\stackrel{(\ref{eq:m_a2},\ref{eq:mm_a2})}{=}a_1-(-1)^k a_1^{3^k} a_1^{3^k(3^k+1)(3^{2k}-1)/2}\\
&\ \pm I I^k a_1^{(3^k+1)/2} a_1^{-(3^k-1)(3^{2k}+1)/4}\left((-1)^k a_1^{(3^k-1)(3^{2k}+1)/2}+1\right)\\
&\ \times I^{k 3^k(3^{2k}-1)/2} a_1^{3^k(3^k+1)(3^{2k}-1)/4} a_1^{-3^k(3^k-1)(3^n-1)/8}\\
&=a_1+(-1)^k a_1^{3^k} a_1^{(3^k-1)(3^{2k}-1)/2}\\
&\ \pm I^{k+1} a_1^{(3^k+1)/2} a_1^{-(3^{2k}+1)(3^k-1)/4}\left((-1)^k a_1^{(3^k-1)(3^{2k}+1)/2}+1\right)\\
&\ \times a_1^{(3^k-1)(3^{2k}-1)/4}\big(a_1^{(3^n-1)/4}\big)^{(2-3^k(3^k-1))/2}\\
&=a_1+(-1)^k a_1 a_1^{(3^k-1)(3^{2k}+1)/2}\\
&\ \pm I^{k+1} a_1\left((-1)^k a_1^{(3^k-1)(3^{2k}+1)/2}+1\right) I^{(2-3^k(3^k-1))/2}\\
&=a_1+(-1)^k a_1 a_1^{(3^k-1)(3^{2k}+1)/2}\pm a_1\left((-1)^k a_1^{(3^k-1)(3^{2k}+1)/2}+1\right)
\end{align*}
is equal to zero if the sign in the last line of the equation is minus. This holds for $c$ taken from the appropriately chosen 
multiplicative coset of $\F_{3^{2k}}^*$ defined by one of the equations (\ref{eq:mm_2}). Thus, both (\ref{eq:1}) and (\ref{eq:3}) hold just 
for one of the cosets defined by (\ref{eq:mm_2}). 

Finally, if (\ref{eq:1}) and (\ref{eq:3}) hold then
\begin{align*}
&-a_1^{3^k}c^{3^{2k}-1}+a_2+a_2^{3^k}c^{(3^k+1)(3^{2k}-1)}\\
&\stackrel{(\ref{eq:3})}{=}a_2+\big(a_1 a_2^{-1}+a_1^{3^k} a_2^{-1} c^{2(3^{2k}-1)}\big)\big(a_1^{3^k}-a_2^{3^k}c^{3^k(3^{2k}-1)}\big)\\
&\stackrel{(\ref{eq:3})}{=}a_2+\big(a_1 a_2^{-1}+a_1^{3^k} a_2^{-1} c^{2(3^{2k}-1)}\big)\big(-a_1^{3^k}+a_1^{3^{2k}} c^{2\cdot 3^k(3^{2k}-1)}\big)\\
&\stackrel{(\ref{eq:mm_2})}{=}a_2-a_2^{-1}\big(a_1-a_1^{3^k} a_2^{3^k(3^{2k}-1)}\big)\big(a_1^{3^k}+a_1^{3^{2k}} a_2^{3^{2k}(3^{2k}-1)}\big)\\
&\stackrel{(\ref{eq:mm_a2})}{=}a_2-a_2^{-1} a_1^{3^k+1}\big(1+(-1)^k a_1^{(3^k-1)(3^{2k}+1)/2}\big)\big(1+(-1)^k a_1^{-(3^k-1)(3^{2k}+1)/2}\big)\\
&\stackrel{(\ref{eq:m_a2})}{=}a_2\mp I^{-k} a_1^{(3^k+1)/2} a_1^{(3^k-1)(3^{2k}+1)/4}\big(1+(-1)^k a_1^{-(3^k-1)(3^{2k}+1)/2}\big)\\
&\stackrel{(\ref{eq:a2})}{=}a_2-I^{-2k}(-1)^k a_2\\
&=0
\end{align*}
which means that (\ref{eq:4}) holds for every $c$ satisfying (\ref{eq:1}) and (\ref{eq:3}). 

Denoting the orthogonal of $V$ as $V^{\bot}$ , it is supplementary to $V$ if and only if $V\cap V^{\bot}=\{0\}$. Assume there exists a 
nonzero $c\in V$ such that $\Tr_n(cv)=0$ for any $v\in V$ (i.e., $c\in V^{\bot}$). Since $c,v\in V$ being the multiplicative coset of 
$\F_{3^{2k}}$, we have $c^{-1} v\in\F_{3^{2k}}$ and 
\begin{align*}
\Tr_n(cv)&=\Tr_{2k}\big(c^{-1} v\Tr_{2k}^n(c^2)\big)\\
&=\Tr_{2k}\big(c v(1+c^{2(3^{2k}-1)})\big)\\
&\stackrel{(\ref{eq:mm_2})}{=}\Tr_{2k}\big(c v(1-a_2^{3^k(3^{2k}-1)})\big)
\end{align*}
is zero for any $v\in V$ if and only if $a_2^{3^{2k}-1}=1$ which is impossible since $a_2$ is a nonsquare in $\fthreen$.\qed 
\end{proof}

\begin{theorem}
 \label{th:binom_gf}
Let $n=4k$. Take any $a_1$ being a nonsquare in $\fthreen$ and define $a_2$ according to (\ref{eq:a2}). Then ternary function $f(x)$ 
mapping $\fthreen$ to $\fthree$ and given by 
\begin{equation}
 \label{eq:binom}
f(x)=\Tr_n\left(a_1 x^{2(3^k+1)}+a_2 x^{(3^k+1)^2}\right)
\end{equation}
is a regular bent function of degree four belonging to the completed Maiorana-McFarland class. 
\end{theorem}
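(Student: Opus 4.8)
The plan is to apply Theorem~\ref{th:MM_crit2} directly with the subspace $V$ constructed in Lemma~\ref{le:subsp1}. By that lemma, $V$ is a nonzero multiplicative coset of $\F_{3^{2k}}^*$, hence an $n/2$-dimensional $\fthree$-subspace of $\fthreen$, and its orthogonal $V^{\bot}$ is supplementary to it. So it remains to verify the two derivative conditions in Theorem~\ref{th:MM_crit2}: that the second-order derivatives $D_{c,d}f$ vanish for all $c,d\in V$, and that the first-order derivatives $D_c f$ are balanced on $\fthreen$ for every $c\in V\setminus\{0\}$.

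\medskip\noindent\textbf{Step 1: compute the derivatives symbolically.} Writing $f(x)=\Tr_n\!\big(a_1 x^{2(3^k+1)}+a_2 x^{(3^k+1)^2}\big)$ and using $x^{3^k+1}=x\cdot x^{3^k}$ together with the fact that over $\fthree$ one has $(x+c)^{3^k}=x^{3^k}+c^{3^k}$, I would expand $D_c f(x)=f(x+c)-f(x)$. The exponent $2(3^k+1)$ gives a form quadratic in $x$ (degree $2$ in each of $x$ and $x^{3^k}$ separately after absorbing Frobenius), and $(3^k+1)^2 = 3^{2k}+2\cdot 3^k+1$ gives a form that is, up to Frobenius, linear in $x^{3^{2k}}$ and quadratic in $x^{3^k}$ and $x$; so $f$ has algebraic degree four and $D_c f$ has degree three, $D_{c,d}f$ degree two. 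After collecting terms and pushing traces around (using $\Tr_n=\Tr_n\circ(\,\cdot\,)^{3^{k}}$ etc.), $D_c f(x)$ should organize into $\Tr_n\big(x^{3^k+1}A(c)+x\,B_1(c,x^{3^{2k}}\text{-type terms})+\cdots\big)$; the precise bookkeeping is the technical heart. The key identities I expect to use are exactly (\ref{eq:1})--(\ref{eq:4}) of Lemma~\ref{le:subsp1}: these were manifestly designed so that the ``bad'' coefficients in $D_{c,d}f$ and in the nonlinear/unbalanced part of $D_c f$ cancel when $c$ (and $d$) lie in $V$.

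\medskip\noindent\textbf{Step 2: vanishing of $D_{c,d}f$ on $V$.} The second-order derivative is bilinear in $c$ and $d$; by polarization it suffices to understand the purely quadratic part of $f$ restricted to a translate of $V$. The cross terms in $D_{c,d}f$ will be combinations like $\Tr_n\big(cd^{3^k}(a_1 + \cdots)\big)$, $\Tr_n\big(c d^{3^k} c^{\cdots}\cdots\big)$, and using $c,d\in V$ means $c^{-1}d\in\F_{3^{2k}}$, so all monomials in $c,d$ reduce to $c^{\text{(fixed power)}}$ times an $\F_{3^{2k}}$-scalar; identities (\ref{eq:2}) and (\ref{eq:3}) (and their Frobenius images (\ref{eq:4})) force the relevant coefficient expressions to vanish, giving $D_{c,d}f\equiv 0$. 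By Proposition~\ref{pr:MM_crit1} this already shows $f\circ(\text{affine})$ lands in $\mathcal M$'s shape, i.e. $f(v+w)=v\cdot\pi(w)+g(w)$.

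\medskip\noindent\textbf{Step 3: balancedness of $D_c f$ for $c\in V\setminus\{0\}$.} With $D_{c,d}f\equiv 0$ on $V$, the function $D_c f(x)$ depends only on the $V^{\bot}$-component $w$ of $x$ and equals $c\cdot\pi(w)$; I must show this is balanced on $\fthreen$, equivalently that $w\mapsto \Tr_n(c\,\pi(w))$ is balanced on $V^{\bot}$. Concretely $D_c f$ is an affine-plus-``linearized'' expression in $x$ whose linear part is $\Tr_n\big(x\cdot L_c\big)$ for some $\fthree$-linear $L_c$ built from $c$ via (\ref{eq:1}) and (\ref{eq:2}); I would show $L_c$ is not identically zero for $c\neq 0$ — this is where the hypothesis that $V^{\bot}$ is genuinely supplementary (last part of Lemma~\ref{le:subsp1}, i.e. $a_2$ a nonsquare) is used, precisely to rule out $L_c=0$. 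A nonzero linear functional composed with an affine map is balanced, so $D_c f$ is balanced.

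\medskip\noindent\textbf{Conclusion.} All hypotheses of Theorem~\ref{th:MM_crit2} are met, so $f$ is a regular bent function in the completed Maiorana--McFarland class; the degree-four claim follows from inspection of the exponents $2(3^k+1)$ and $(3^k+1)^2$, whose base-$3$ digit sums are $4$ (and cannot drop under reduction mod $3^n-1$ for these exponents). The main obstacle I anticipate is purely the Step~1--2 expansion: keeping track of the many Frobenius-twisted monomials in $D_c f$ and $D_{c,d}f$ and recognizing each coefficient as one of the four expressions killed by Lemma~\ref{le:subsp1} is lengthy and error-prone, but conceptually routine once one commits to the bookkeeping; no new idea beyond Lemma~\ref{le:subsp1} and Theorem~\ref{th:MM_crit2} should be needed.
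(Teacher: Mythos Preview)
Your overall plan---apply Theorem~\ref{th:MM_crit2} with the coset $V$ from Lemma~\ref{le:subsp1}---is exactly the paper's approach, and your Step~2 sketch (use $c^{-1}d\in\F_{3^{2k}}$ to reduce all monomials and then kill the coefficients via (\ref{eq:1})--(\ref{eq:4})) is correct and matches the paper's computation.

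The gap is in Step~3. You write that $D_c f$ is ``an affine-plus-`linearized' expression in $x$ whose linear part is $\Tr_n(x\cdot L_c)$'' and conclude balancedness from $L_c\neq 0$. But $D_c f$ has algebraic degree three, not one: after you quotient out by $V$ (i.e.\ restrict $x$ to a supplementary subspace), what survives is \emph{not} a linear functional. In the paper's computation one takes the transversal to be $\F_{3^{2k}}$ (or a specific multiplicative coset of it when $V=\F_{3^{2k}}$), and after heavy simplification via (\ref{eq:1}), (\ref{eq:3}), (\ref{eq:4}) the derivative collapses to
\[
D_c f(x)=\Tr_{2k}\!\big(\gamma_c\, x^{2\cdot 3^k+1}\big),
\]
a genuinely cubic expression in $x$. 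Balancedness then rests on two facts you do not mention: first, that $x\mapsto x^{2\cdot 3^k+1}$ permutes $\F_{3^{2k}}$, which needs the arithmetic check $\gcd(2\cdot 3^k+1,3^{2k}-1)=1$; and second, that the coefficient $\gamma_c\in\F_{3^{2k}}$ is nonzero for every $c\in V\setminus\{0\}$, which in the paper is a separate (and not entirely trivial) computation reducing to showing $(C-1)^{2\cdot 3^k+1}\neq 0$ for $C=c^{3^{2k}-1}\neq 1$. Your appeal to ``a nonzero linear functional composed with an affine map'' does not cover either ingredient, so as written Step~3 would not go through.
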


\begin{proof}
To prove bentness, we make use of Theorem~\ref{th:MM_crit2} and start with calculating the second-order derivatives of $f$ in the direction 
of elements $c,d\in V$, where $V$ is defined in Lemma~\ref{le:subsp1}. Using a computer algebra package, we obtain that 
\begin{align*}
D_{c,d}f(x)&=\\
\Tr_n\Big(&-a_1c^{2\cdot3^k+1}d+a_1c^{2\cdot3^k}d^2-a_1c^{3^k+2}d^{3^k}+a_1c^{3^k+1}d^{3^k+1}-a_1c^{3^k}d^{3^k+2}\\
&+a_1c^2d^{2\cdot3^k}-a_1cd^{2\cdot3^k+1}+a_2c^{3^{2k}+2\cdot3^k}d-a_2c^{3^{2k}+3^k+1}d^{3^k}\\
&-a_2c^{3^{2k}+3^k}d^{3^k+1}+a_2c^{3^{2k}+1}d^{2\cdot3^k}+a_2c^{3^{2k}}d^{2\cdot3^k+1}+a_2c^{2\cdot3^k+1}d^{3^{2k}}\\
&+a_2c^{2\cdot3^k}d^{3^{2k}+1}-a_2c^{3^k+1}d^{3^{2k}+3^k}-a_2c^{3^k}d^{3^{2k}+3^k+1}+a_2cd^{3^{2k}+2\cdot3^k}\\
&+\big(a_1c^{3^k+1}d^{3^k}-a_1c^{2\cdot3^k}d+a_1c^{3^k}d^{3^k+1}-a_1cd^{2\cdot3^k}-a_2c^{3^{2k}+3^k}d^{3^k}\\
&+a_2c^{3^{2k}}d^{2\cdot3^k}+a_2c^{2\cdot3^k}d^{3^{2k}}-a_2c^{3^k}d^{3^{2k}+3^k}\big)x+\big(a_1c^{3^k+1}d-a_1c^{3^k}d^2\\
&-a_1c^2d^{3^k}+a_1cd^{3^k+1}-a_2c^{3^{2k}+3^k}d-a_2c^{3^{2k}+1}d^{3^k}-a_2c^{3^{2k}}d^{3^k+1}\\
&-a_2c^{3^k+1}d^{3^{2k}}-a_2c^{3^k}d^{3^{2k}+1}-a_2cd^{3^{2k}+3^k}\big)x^{3^k}\\
&+\big(a_2c^{2\cdot3^k}d-a_2c^{3^k+1}d^{3^k}-a_2c^{3^k}d^{3^k+1}+a_2cd^{2\cdot3^k}\big)x^{3^{2k}}\\
&+\big(a_1c^{3^k}d+a_1cd^{3^k}-a_2c^{3^{2k}}d^{3^k}-a_2c^{3^k}d^{3^{2k}}\big)x^{3^k+1}\\
&-\big(a_2c^{3^k}d+a_2cd^{3^k}\big)x^{3^k(3^k+1)}+\big(a_2c^{3^{2k}}d+a_2cd^{3^{2k}}-a_1cd\big)x^{2\cdot3^k}\\
&-a_1c^{3^k}d^{3^k}x^2-a_2c^{3^k}d^{3^k}x^{3^{2k}+1}\Big)\enspace.
\end{align*}
We eliminate the terms in $D_{c,d}f(x)$ by observing the following. Since $c,d$ belong to the same multiplicative coset of $\F_{3^{2k}}$, 
then denoting $e=c^{-1}d\in\F_{3^{2k}}$ we obtain
\begin{align*}
\Tr_n\big(a_1 c^{2\cdot3^k+1}d\big)&=\Tr_{2k}\big(e\Tr_{2k}^n(a_1 c^{2(3^k+1)})\big)\stackrel{(\ref{eq:2})}{=}0\\
\Tr_n\big(a_2c^{3^{2k}+2\cdot3^k}d\big)&=\Tr_{2k}\big(c^{3^{2k}+1}e\Tr_{2k}^n(a_2 c^{2\cdot3^k})\big)\stackrel{(\ref{eq:1})}{=}0\enspace.
\end{align*}
Using similar observations with (\ref{eq:1}) and (\ref{eq:2}) we eliminate all constant terms (not depending on $x$) in $D_{c,d}f(x)$. 
Also, 
\[\Tr_n\big(a_2c^{3^k}d^{3^k}x^{3^{2k}+1}\big)=\Tr_{2k}\big(x^{3^{2k}+1}e^{3^k}\Tr_{2k}^n(a_2c^{2\cdot 3^k})\big)\stackrel{(\ref{eq:1})}{=}0\enspace.\]

Further, take a look at the coefficients standing at cyclotomic equivalent powers $x^{3^k+1}$ and $x^{3^k(3^k+1)}$. Having a trace 
expression, we collect these coefficients at the common power $x^{3^k(3^k+1)}$ to obtain 
\begin{align*}
&a_1^{3^k}c^{3^{2k}}d^{3^k}+a_1^{3^k}c^{3^k}d^{3^{2k}}-a_2^{3^k}c^{3^{3k}}d^{3^{2k}}-a_2^{3^k}c^{3^{2k}}d^{3^{3k}}-a_2c^{3^k}d-a_2cd^{3^k}\\
&=\big(a_1^{3^k}c^{3^{2k}}d^{3^k}-a_2^{3^k}c^{3^{2k}}d^{3^{3k}}-a_2cd^{3^k}\big)+\big(a_1^{3^k}c^{3^k}d^{3^{2k}}-a_2^{3^k}c^{3^{3k}}d^{3^{2k}}-a_2c^{3^k}d\big)\\
&=\big(a_1^{3^k}c^{3^k(3^k+1)}-a_2^{3^k}c^{3^{2k}(3^k+1)}-a_2c^{3^k+1}\big)\big(e^{3^k}+e\big)\stackrel{(\ref{eq:4})}{=}0\enspace.
\end{align*}

Acting similarly on the coefficients standing at cyclotomic equivalent powers $x^2$ and $x^{2\cdot 3^k}$ we obtain 
\begin{align*}
&a_2c^{3^{2k}}d+a_2cd^{3^{2k}}-a_1cd-a_1^{3^k}c^{3^{2k}}d^{3^{2k}}\\
&=e\big(a_2c^{3^{2k}+1}+a_2c^{3^{2k}+1}-a_1c^2-a_1^{3^k}c^{2\cdot 3^{2k}}\big)\stackrel{(\ref{eq:3})}{=}0\enspace.
\end{align*}

Now we are left with the terms with cyclotomic equivalent powers $x$, $x^{3^k}$ and $x^{3^{2k}}$ that we regroup, using cyclotomic 
equivalence as follows: 
\begin{align*}
&D_{c,d}f(x)=\\
&=\Tr_n\Big(\big(c^{3^k}+d^{3^k}\big)\big(a_1cd-a_2c^{3^{2k}}d-a_2cd^{3^{2k}}+a_1^{3^k}c^{3^{2k}}d^{3^{2k}}\big)x^{3^k}\\
&\ -\big(a_2^{3^{2k}}c^{3^{3k}+1}d^{3^{3k}}+a_2c^{3^k+1}d^{3^k}\big)x^{3^{2k}}-\big(a_2^{3^{2k}}c^{3^{3k}}d^{3^{3k}+1}+a_2c^{3^k}d^{3^k+1}\big)x^{3^{2k}}\\
&\ +c\big(a_2^{3^{2k}}d^{2\cdot3^{3k}}+a_2d^{2\cdot3^k}\big)x^{3^{2k}}+d\big(a_2^{3^{2k}}c^{2\cdot3^{3k}}+a_2c^{2\cdot3^k}\big)x^{3^{2k}}\\
&\ -c^{3^k}\big(a_1^{3^k}d^{2\cdot3^{2k}}+a_2d^{3^{2k}+1}+a_1d^2\big)x^{3^k}
-d^{3^k}\big(a_1^{3^k}c^{2\cdot3^{2k}}+a_2c^{3^{2k}+1}+a_1c^2\big)x^{3^k}\Big)\\
&=\Tr_n\Big(\big(c^{3^k}+d^{3^k}\big)\big(a_1cd+a_2c^{3^{2k}+1}e+a_1^{3^k}c^{2\cdot 3^{2k}}e\big)x^{3^k}\\
&\ -c\big(e^{3^k}+e^{3^k+1}\big)\big(a_2^{3^{2k}}c^{2\cdot 3^{3k}}+a_2c^{2\cdot 3^k}\big)x^{3^{2k}}\Big)\\
&=\Tr_n\Big(cd\big(c^{3^k}+d^{3^k}\big)\big(a_1+a_2c^{3^{2k}-1}+a_1^{3^k}c^{2(3^{2k}-1)}\big)x^{3^k}\Big)\\
&\equiv 0
\end{align*}
since (\ref{eq:1}) and (\ref{eq:3}) hold.

Now calculate the first-order derivatives of $f$ in the direction of elements $c\in V\setminus\{0\}$, where $V$ is defined in 
Lemma~\ref{le:subsp1} and prove they are balanced. 
\begin{align*}
&D_c f(x)=\\
&=\Tr_n\Big(a_1c^{2(3^k+1)} + a_2c^{(3^k+1)^2} + a_2c^{2\cdot 3^k}x^{3^{2k}+1} + a_1c^{3^k+1}x^{3^k+1}\\
&\ - a_2c^{3^k(3^k+1)}x^{3^k+1} - a_2c^{3^k+1}x^{3^k(3^k+1)} + a_1c^2x^{2\cdot 3^k} + a_2c^{3^{2k}+1}x^{2\cdot 3^k} + a_1c^{2\cdot 3^k}x^2\\
&\ - a_1c^{2\cdot 3^k+1}x + a_2c^{3^{2k}+2\cdot 3^k}x - a_1c^{3^k+2}x^{3^k} - a_2c^{3^{2k}+3^k+1}x^{3^k} + a_2c^{2\cdot 3^k+1}x^{3^{2k}}\\
&\ - a_1cx^{2\cdot 3^k+1} + a_2c^{3^{2k}}x^{2\cdot 3^k+1} + a_2cx^{3^{2k}+2\cdot 3^k} - a_1c^{3^k}x^{3^k+2} - a_2c^{3^k}x^{3^{2k}+3^k+1}\Big)\\
&=-D_{c,c}f(x)\\
&\ + \Tr_n\Big(-c^{3^k+2}\big(a_1^{3^k}c^{2(3^{2k}-1)} + a_1 + a_2c^{3^{2k}-1}\big)x^{3^k} + a_2c^{3^{2k}+2\cdot 3^k}x + a_2c^{2\cdot 3^k+1}x^{3^{2k}}\\
&\ - \big(a_1c - a_2c^{3^{2k}}\big)x^{2\cdot 3^k+1} + a_2cx^{3^{2k}+2\cdot 3^k} - a_1c^{3^k}x^{3^k+2} - a_2c^{3^k}x^{3^{2k}+3^k+1}\Big)\\
&=\Tr_n\Big(a_2c^{3^{2k}+2\cdot 3^k}x + a_2c^{2\cdot 3^k+1}x^{3^{2k}} - \big(a_1c - a_2c^{3^{2k}}\big)x^{2\cdot 3^k+1}\\
&\ + a_2cx^{3^{2k}+2\cdot 3^k} - a_1c^{3^k}x^{3^k+2} - a_2c^{3^k}x^{3^{2k}+3^k+1}\Big)\enspace,
\end{align*}
by (\ref{eq:3}) and using that $D_{c,c}f(x)\equiv 0$.

Since $D_{c,d}f(x)\equiv 0$ on $\fthreen$ for any $c,d\in V$, we conclude that $D_c f(x)$ is constant on the cosets $x+V$ for any 
$x\in\fthreen$. It is sufficient to take $x$ running over a supplementary subspace of $V$ that we denote $W$. Assume $\alpha$ is a 
primitive element of $\fthreen$. Here two cases are possible: if $V=\F_{3^{2k}}$ then we take $W=\alpha^{(3^{2k}+1)/2}\F_{3^{2k}}$; 
otherwise, $V\cap\F_{3^{2k}}=\{0\}$ and we take $W=\F_{3^{2k}}$. 

First, if $c\in\F_{3^{2k}}^*$ and $x\in\alpha^{(3^{2k}+1)/2}\F_{3^{2k}}$ then $x^{3^{2k}}=-x$ and 
\begin{align*}
D_c f(x)&=-\Tr_n\Big(c\big(a_1 - a_1^{3^k} + a_2^{3^k}\big)x^{2\cdot 3^k+1}\Big)\\
&\stackrel{(\ref{eq:3},\ref{eq:2})}{=}\Tr_n\big(ca_2^{3^k}x^{2\cdot 3^k+1}\big)\\
&\stackrel{(\ref{eq:1})}{=}-\Tr_{2k}\big(ca_2^{3^k}x^{2\cdot 3^k+1}\big)
\end{align*}
since in this case, (\ref{eq:2}) gives $a_1^{3^{2k}}=-a_1$ and (\ref{eq:1}) gives $a_2^{3^{2k}}=-a_2$. Note that since 
\begin{align*}
\gcd(2\cdot 3^k+1,3^{2k}-1)&=\gcd(2\cdot 3^k+1,2\cdot 3^{2k}-2)\\
&=\gcd(2\cdot 3^k+1,-2-3^k)\\
&=\gcd(-3,-2-3^k)=1\enspace,
\end{align*}
function $x^{2\cdot 3^k+1}$ is a one-to-one map of $\alpha^{(3^{2k}+1)/2}\F_{3^{2k}}$ onto $\alpha^{(3^{2k}+1)(2\cdot 3^k+1)/2}\F_{3^{2k}}$ 
and $ca_2^{3^k}x^{2\cdot 3^k+1}$ runs over all the elements in $\F_{3^{2k}}$. 

In the other case when $x\in\F_{3^{2k}}$, we have
\begin{align*}
&D_c f(x)=\Tr_n\Big(a_2c^{2\cdot 3^k}\big(c^{3^{2k}} + c\big)x\\
&\ - c\big(a_1 - a_2c^{3^{2k}-1} - a_2 + a_1^{3^k}c^{3^{2k}-1} + a_2^{3^k}c^{3^{2k}-1}\big)x^{2\cdot 3^k+1}\Big)\\
&\stackrel{(\ref{eq:1},\ref{eq:3},\ref{eq:4})}{=}\Tr_n\Big(c\big(a_1^{3^k}c^{2(3^{2k}-1)} - a_2 c^{3^{2k}-1} - a_2^{3^k}c^{(3^k+1)(3^{2k}-1)} - a_2^{3^k}c^{3^{2k}-1}\big)x^{2\cdot 3^k+1}\Big)\\
&\stackrel{(\ref{eq:4})}{=}\Tr_n\Big(c\big(a_2^{3^k}c^{(3^k+2)(3^{2k}-1)} - a_2^{3^k}c^{(3^k+1)(3^{2k}-1)} - a_2^{3^k}c^{3^{2k}-1}\big)x^{2\cdot 3^k+1}\Big)\\
&=\Tr_n\Big(a_2^{3^k}c^{3^{2k}}\big(c^{(3^k+1)(3^{2k}-1)} - c^{3^k(3^{2k}-1)} - 1\big)x^{2\cdot 3^k+1}\Big)\\
&=\Tr_{2k}\Big(\Tr^n_{2k}\big(a_2^{3^k}c^{3^{2k}}\big(c^{(3^k+1)(3^{2k}-1)} - c^{3^k(3^{2k}-1)} - 1\big)\big)x^{2\cdot 3^k+1}\Big)\enspace.
\end{align*}
Note that since $\gcd(2\cdot 3^k+1,3^{2k}-1)=1$, function $x^{2\cdot 3^k+1}$ is a permutation over $\F_{3^{2k}}$. It remains to show that 
\[\Tr^n_{2k}\big(a_2^{3^k}c^{3^{2k}}\big(c^{(3^k+1)(3^{2k}-1)} - c^{3^k(3^{2k}-1)} - 1\big)\big)\neq 0\enspace.\]
If the opposite holds then
\[a_2^{3^k}c^{3^{2k}}\big(c^{(3^k+1)(3^{2k}-1)} - c^{3^k(3^{2k}-1)} - 1\big)+a_2^{3^{3k}}c\big(c^{(3^k+1)(3^{2k}-1)} - c^{3^k(3^{2k}-1)} - 1\big)^{3^{2k}}=0\]
Denoting $c^{3^{2k}-1}=C\ne 0$ and using $a_2^{3^{3k}}=-a_2^{3^k}C^2$ (which follows from (\ref{eq:1})) we obtain 
\begin{align*}
&a_2^{3^k}C\big(C^{3^k+1} - C^{3^k} - 1\big)-a_2^{3^k}C^2\big(C^{3^k+1} - C^{3^k} - 1\big)^{3^{2k}}\\
&=a_2^{3^k}C\big(C^{3^k+1} - C^{3^k} - 1\big)-a_2^{3^k}C\big(C^{-3^k} - C^{-3^k+1} - C\big)=0
\end{align*}
that is equivalent to
\[C^{3^k+1} - C^{3^k} - 1 - C^{-3^k} + C^{-3^k+1} + C=C^{-3^k}\big(C-1\big)^{2\cdot 3^k+1}=0\]
which is impossible since $C\ne 1$ as $c\notin\F_{3^{2k}}$.

We also obtain that
\[D_c f(x)=\Tr_{2k}\Big(a_2^{3^k}c^{(3^{2k}-1)(1-3^k)+1}\big(c^{3^{2k}-1} - 1\big)^{2\cdot 3^k+1}x^{2\cdot 3^k+1}\Big)\enspace.\]

This proves that the first-order derivatives of $f$ in the direction of elements $c\in V\setminus\{0\}$ are balanced.\qed
\end{proof}

\begin{note}
Take $n=4k$ with odd $k$ and function $f$ having the form of (\ref{eq:binom}) with arbitrary $a_1,a_2\in\fthreen$. Let $\alpha$ be a 
primitive element of $\fthreen$. Substitute $x$ in $f(x)$ by $\alpha^{(3^n-1)/16} x$. Since $3^k+1\pmod{8}=4$ then we obtain function 
$-g(x)$ where
\[g(x)=\Tr_n\left(a_1 x^{2(3^k+1)}-a_2 x^{(3^k+1)^2}\right)\enspace.\]
Therefore, functions $f(x)$ and $g(x)$ are EA-equivalent and and either both are bent or not.
\end{note}

\begin{note}
Following are some facts on the number of EA-inequivalent bent functions having the form of (\ref{eq:binom}) for small values of $n$. These 
upper bounds are obtained using considerations that eliminate obvious cases when equivalence holds. For $n=4$, function (\ref{eq:binom}) is 
of Dillon type and there are at most $2$ inequivalent cases. For $n=8$ and $n=12$, there are at most $6$ EA-inequivalent bent functions of 
this type; and at most $22$ for $n=16$. 
\end{note}

\section{Trinomial Bent Functions in the Completed Class of $\mathcal{M}$}
 \label{sec:MM_trinom}
In this section, we present a ternary trinomial bent function in the completed  Maiorana-McFarland class. The proof is based on analyzing 
derivatives in the direction of elements taken from the particularly chosen $n/2$-dimensional subspace of $\fthreen$ and applying 
Theorem~\ref{th:MM_crit2} or Proposition~\ref{pr:MM_crit3}. We begin with a technical lemma that defines this subspace and reveals some 
relevant properties of its elements. 

\begin{lemma}
 \label{le:subsp2}
Let $n=2k$ with $k>1$. Also let $a_1=a_3=1$, $a_2=\pm I$ for $k$ odd and $a_1=\alpha^{3^k+2}$, $a_3=\alpha$,
\begin{equation}
 \label{eq:tri_a2}
a_2=\pm Ia_1^{-(3^k-3)/2}=\mp Ia_3^{(3^k+5)/2}
\end{equation}
for $k$ even, where $\alpha$ is a primitive element of $\fthreen$ and $I$ is a primitive $4$th root of unity in $\fthreen$ and the sign of 
$a_2$ is arbitrary. 

There exists a unique nonzero multiplicative coset of $\fthreek$ that we denote $V$, such that for any nonzero $c\in V$, 
\begin{equation}
 \label{eq:V_def}
a_1 c^{3^k-1}=a_2\enspace.
\end{equation}
Moreover, for such $c$,
\begin{equation}
 \label{eq:traces}
\Tr_k^n(a_1 c^2)=\Tr_k^n(a_2 c^4)=\Tr_k^n(a_3c^2)=0\enspace.
\end{equation}
Also, $V\cap\fthreek=\{0\}$ and the orthogonal of $V$ is a subspace supplementary to $V$ when $k$ is even and $V$ is self-orthogonal when 
$k$ is odd. 
\end{lemma}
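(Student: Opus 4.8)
\emph{Proof plan.} First I would rewrite (\ref{eq:V_def}) as $c^{3^k-1}=a_2/a_1$ and analyse its solvability. Since $\gcd(3^k-1,3^n-1)=3^k-1$, the map $c\mapsto c^{3^k-1}$ on $\fthreen^*$ has kernel $\fthreek^*$ and image the subgroup of order $3^k+1$, namely the norm-one elements for $N_{\fthreen/\fthreek}(x)=x^{3^k+1}$; hence $c^{3^k-1}=a_2/a_1$ is solvable if and only if $(a_2/a_1)^{3^k+1}=1$, in which case its solution set is a single coset of $\fthreek^*$, so $V=c_0\fthreek$ for any particular solution $c_0$ is a $k$-dimensional $\fthree$-subspace. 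Any coset all of whose nonzero elements satisfy (\ref{eq:V_def}) must be this one, so $V$ is unique; and it is nonzero, i.e.\ $V\neq\fthreek$, because $a_1\neq a_2$ (for $k$ odd $\pm I\neq 1$; for $k$ even $a_1=a_2$ would give $a_1^{2(3^k-1)}=1$, hence $3^k+1\mid 2$, impossible). It then remains to check $(a_2/a_1)^{3^k+1}=1$ straight from the definition of $a_2$: for $k$ odd this reads $(\pm I)^{3^k+1}=1$, which holds since $4\mid 3^k+1$; for $k$ even one substitutes $a_1=\alpha^{3^k+2}$, uses $\alpha^{(3^{2k}-1)/2}=-1$, and keeps track of $I^{3^k+1}=I^2=-1$.

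For the trace identities (\ref{eq:traces}), write $\Tr^n_k(z)=z+z^{3^k}$. For $c\in V$ one has $c^{3^k}=c\cdot c^{3^k-1}=(a_2/a_1)\,c$, so $\Tr^n_k(a_1c^2)=c^2\bigl(a_1+a_1^{3^k}(a_2/a_1)^2\bigr)$, which vanishes exactly when $(a_2/a_1)^2=-a_1^{1-3^k}$, equivalently $a_2^2=-a_1^{3-3^k}$ --- and this is precisely the relation built into the definition of $a_2$ in both parities (for $k$ odd, $a_2^2=I^2=-1=-a_1^{3-3^k}$ as $a_1=1$; for $k$ even, $a_2^2=I^2a_1^{-(3^k-3)}=-a_1^{3-3^k}$). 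Combining $(a_2/a_1)^2=-a_1^{1-3^k}$ with $(a_2/a_1)^{3^k+1}=1$ yields $(a_2/a_1)^{3^k-1}=-a_1^{3^k-1}$, i.e.\ $a_2^{3^k-1}=-a_1^{2(3^k-1)}$, and substituting this into $\Tr^n_k(a_2c^4)=c^4\bigl(a_2+a_2^{3^k}(a_2/a_1)^4\bigr)$ gives $c^4(a_2-a_2)=0$. Finally $a_3/a_1\in\fthreek$ (it is $1$ for $k$ odd, and $\alpha^{-(3^k+1)}$ for $k$ even, whose $(3^k-1)$-th power equals $1$), so by $\fthreek$-linearity of $\Tr^n_k$ we get $\Tr^n_k(a_3c^2)=(a_3/a_1)\Tr^n_k(a_1c^2)=0$.

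For the position of $V$: if $0\neq v\in V\cap\fthreek$, writing $v=c_0z$ with $z\in\fthreek^*$ gives $c_0=vz^{-1}\in\fthreek^*$, contradicting $c_0^{3^k-1}=a_2/a_1\neq 1$, so $V\cap\fthreek=\{0\}$. For the orthogonal, note that $c^{-1}v\in\fthreek$ for $c,v\in V$, hence $\Tr_n(cv)=\Tr_k\bigl(\Tr^n_k(cv)\bigr)=\Tr_k\bigl(c^{-1}v\,\Tr^n_k(c^2)\bigr)$ with $\Tr^n_k(c^2)=c^2\bigl(1+(a_2/a_1)^2\bigr)=c^2\bigl(1-a_1^{1-3^k}\bigr)\in\fthreek$. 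When $k$ is odd, $a_1=1$ makes $\Tr^n_k(c^2)=0$ for all $c\in V$, so $\Tr_n$ is identically zero on $V\times V$ and $V$ is self-orthogonal (indeed $V=V^{\bot}$ by comparing dimensions). When $k$ is even, $a_1=\alpha^{3^k+2}\notin\fthreek$ (its $(3^k-1)$-th power is $\alpha^{3^k-1}\neq 1$), so $\Tr^n_k(c^2)\neq 0$ for $c\neq 0$; then if $c\in V\cap V^{\bot}$ the map $w\mapsto\Tr_k\bigl(w\,\Tr^n_k(c^2)\bigr)$ vanishes on all of $\fthreek$ (since $w=c^{-1}v$ runs over $\fthreek$ as $v$ runs over $V$), forcing $\Tr^n_k(c^2)=0$ by nondegeneracy of the trace form of $\fthreek$ --- a contradiction. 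Hence $V\cap V^{\bot}=\{0\}$, and since $\dim V=\dim V^{\bot}=k$, the orthogonal is supplementary to $V$.

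The main obstacle is purely computational: verifying $(a_2/a_1)^{3^k+1}=1$ and the equivalent identity $a_2^2=-a_1^{3-3^k}$ in the even case, where $a_2$ is presented through powers of $\alpha$, requires careful bookkeeping --- in particular one must use that $I\in\fthreek$ exactly when $k$ is even (so $I^{3^k}=I$ for $k$ even and $I^{3^k}=-I$ for $k$ odd), which controls both the norm computation and the sign that makes $\Tr^n_k(a_1c^2)$ vanish.
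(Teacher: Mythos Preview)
Your proposal is correct and follows essentially the same route as the paper's proof: both rewrite (\ref{eq:V_def}) as $c^{3^k-1}=a_2/a_1$ and check solvability, then verify the traces using $c^{3^k}=(a_2/a_1)c$ together with the key identity $a_2^{3^k-1}=-a_1^{2(3^k-1)}$, and finally analyse $\Tr_n(cv)=\Tr_k\bigl(c^{-1}v\,\Tr^n_k(c^2)\bigr)$ to settle the orthogonality claims. The only differences are cosmetic---you phrase solvability via the norm-one condition $(a_2/a_1)^{3^k+1}=1$ where the paper uses an explicit divisibility on exponents of $\alpha$, you deduce $a_2^{3^k-1}=-a_1^{2(3^k-1)}$ from the two simpler relations rather than computing it directly from $I^{3^k-1}=(-1)^k$, and you handle $\Tr^n_k(a_3c^2)$ via $a_3/a_1\in\fthreek$ and $\fthreek$-linearity rather than by direct expansion---and you make the nondegeneracy step for $V\cap V^\bot=\{0\}$ more explicit than the paper does.
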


\begin{proof}
Rewrite (\ref{eq:V_def}) as
\begin{equation}
 \label{eq:ca1}
c^{3^k-1}=a_1^{-1}a_2\stackrel{(\ref{eq:tri_a2})}{=}\pm Ia_1^{-(3^k-1)/2}
\end{equation}
and observe that solution for $c$ in the form of a nonzero multiplicative coset of $\fthreek^*$ exists if $3^k-1$ divides 
$(3^n-1)/4-i(3^k-1)/2$, where $a_1=\alpha^i$. Since $a_1$ and $a_3$ have the same parity defined by the parity of $k$, this exactly allows 
the above divisibility to hold. The coset defined in (\ref{eq:V_def}) is unique. 

Now check that identities (\ref{eq:traces}) hold. First, note that
\begin{align}
 \label{eq:tri_a12}
\nonumber a_2^{3^k-1}&\stackrel{(\ref{eq:tri_a2})}{=}I^{3^k-1}a_1^{-(3^k-3)(3^k-1)/2}\\
&=(-1)^k a_1^{-(3^n-1)/2+2(3^k-1)}=-a_1^{2(3^k-1)}\enspace.
\end{align}
Then
\begin{align*}
&Tr_k^n(a_1c^2)=a_1 c^2+a_1^{3^k} c^{2\cdot3^k}\stackrel{(\ref{eq:ca1})}{=}a_1 c^2-a_1 c^2=0\\
&\Tr_k^n(a_2c^4)=a_2c^4+a_2^{3^k}c^{4\cdot3^k}\stackrel{(\ref{eq:ca1})}{=}a_2c^4+a_2^{3^k}a_1^{-2(3^k-1)}c^4\stackrel{(\ref{eq:tri_a12})}{=}a_2c^4-a_2c^4=0\\
&\Tr_k^n(a_3c^2)=a_3c^2\big(1+a_3^{3^k-1}c^{2(3^k-1)}\big)\stackrel{(\ref{eq:ca1})}{=}a_3c^2\big(1-a_3^{3^k-1}a_1^{-(3^k-1)}\big)=0\enspace.\\
\end{align*}

Assume there exists a nonzero $c\in V\cap\fthreek$, then by (\ref{eq:V_def}), $a_1=a_2$ and (\ref{eq:tri_a2}) gives that 
$a_1^{(3^k-1)/2}=\pm I$, thus, $a_1^{3^k-1}=-1$ that contradicts definition of $a_1$. 

Denoting the orthogonal of $V$ as $V^{\bot}$ , it is supplementary to $V$ if and only if $V\cap V^{\bot}=\{0\}$. Assume there exists a 
nonzero $c\in V$ such that $\Tr_n(cv)=0$ for any $v\in V$ (i.e., $c\in V^{\bot}$). Since $c,v\in V$ being the multiplicative coset of 
$\F_{3^k}$, we have $c^{-1} v\in\F_{3^k}$ and 
\begin{align*}
\Tr_n(cv)&=\Tr_k\big(c^{-1} v\Tr_k^n(c^2)\big)\\
&=\Tr_k\big(cv(1+c^{2(3^k-1)})\big)\\
&\stackrel{(\ref{eq:V_def})}{=}\Tr_k\big(cv(1+a_1^{-2}a_2^2)\big)\\
&\stackrel{(\ref{eq:tri_a2})}{=}\Tr_k\big(cv(1-a_1^{-(3^k-1)})\big)
\end{align*}
is zero for any $v\in V$ if and only if $a_1^{3^k-1}=1$ which is impossible for $k$ even. For $k$ odd, $V$ is self-orthogonal.\qed
\end{proof}

\begin{theorem}
 \label{th:trinom_gf}
Let $n=2k$ with $k>1$ and not divisible by four. Define coefficients $a_1$, $a_2$ and $a_3$ as in Lemma~\ref{le:subsp2}. Then ternary 
function $f(x)$ mapping $\fthreen$ to $\fthree$ and given by 
\begin{equation}
 \label{eq:trinom}
f(x)=\Tr_n\left(a_1 x^{2\cdot3^k+4} + a_2 x^{3^k+5} + a_3 x^2\right)
\end{equation}
is a regular bent function of degree four belonging to the completed Maiorana-McFarland class. 
\end{theorem}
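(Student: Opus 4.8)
The strategy mirrors the proof of Theorem~\ref{th:binom_gf}: I will verify the hypotheses of Theorem~\ref{th:MM_crit2} (when $k$ is even, where $V^{\bot}$ is supplementary to $V$) or of Proposition~\ref{pr:MM_crit3} (when $k$ is odd, where $V$ is self-orthogonal), using the $k$-dimensional multiplicative coset $V$ of $\fthreek$ introduced in Lemma~\ref{le:subsp2}. Concretely, I would first compute the second-order derivative $D_{c,d}f(x)$ for $c,d\in V$ and show it vanishes identically on $\fthreen$; then compute the first-order derivative $D_c f(x)$ for $c\in V\setminus\{0\}$ and show it is balanced on $\fthreen$ (in the $k$-even case) or that the associated $\pi$ is a permutation of a supplementary subspace $W$ (in the $k$-odd case). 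Lemma~\ref{le:subsp2} supplies the three trace identities \eqref{eq:traces} and the defining relation \eqref{eq:V_def}, which are exactly the algebraic facts that will make all the unwanted monomials in the derivatives cancel.

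\textbf{Computing and killing the second-order derivative.} Expanding $D_{c,d}f(x)$ by trilinearity produces a trace of a polynomial in $x$ (with powers among $1, 2, 3^k, 3^k+1, 3^k+2, 2\cdot3^k, 2\cdot3^k+1$ and their cyclotomic conjugates) whose coefficients are symmetric expressions in $c$ and $d$. Since $c,d$ lie in a common coset of $\fthreek$, writing $e=c^{-1}d\in\fthreek$ lets me factor each coefficient. For the constant (in $x$) terms and the terms involving the absolute trace of something of the form $a_i c^{2\cdot(\cdots)}d^{(\cdots)}$, I would pull out $\Tr_k^n$ of $a_1c^2$, $a_2c^4$, or $a_3c^2$ and invoke \eqref{eq:traces} to get zero, exactly as in the binomial case with \eqref{eq:1},\eqref{eq:2}. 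For the genuinely $x$-dependent terms, I would group cyclotomic-equivalent powers (e.g. combine the $x^{3^k+2}$ coefficient with the conjugate of the $x$-coefficient, and similarly for the $x^{2\cdot3^k+1}$ family), factor out $(e^{3^k}+e)$ or $e$ from the resulting symmetric forms, and then use \eqref{eq:V_def}, i.e. $a_1c^{3^k-1}=a_2$, together with its consequence $a_2^{3^k-1}=-a_1^{2(3^k-1)}$ from \eqref{eq:tri_a12}, to collapse the surviving bracket to zero. The upshot is $D_{c,d}f\equiv 0$, so $f$ is affine on each coset $x+V$ and has the bivariate Maiorana-McFarland shape $f(v+w)=v\cdot\pi(w)+g(w)$.

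\textbf{Balancedness of the first-order derivative / permutation property.} Next I would compute $D_c f(x)$ for $c\in V\setminus\{0\}$. Using $D_cf(x)=-D_{c,c}f(x)+\big(f(x+c)-f(x)+D_{c,c}f(x)\big)$ and $D_{c,c}f\equiv0$, the derivative reduces to a short trace expression; applying \eqref{eq:V_def} and \eqref{eq:traces} once more should cut it down to essentially $\Tr_n\big(\lambda(c)\,x^{2\cdot3^k+1}\big)$ plus possibly one more monomial, for an explicit $\lambda(c)$. Here the key arithmetic fact is $\gcd(2\cdot3^k+1,3^k-1)=1$ (the same computation as in the binomial proof, since $2\cdot3^k+1\equiv -3$ and $\gcd(3,3^k-1)=1$ for $k>0$), so $x\mapsto x^{2\cdot3^k+1}$ permutes $\fthreek$ and permutes each coset $\alpha^{(3^k+1)/2}\fthreek$. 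In the $k$-even case I would take $W=\fthreek$ (legitimate since $V\cap\fthreek=\{0\}$ by Lemma~\ref{le:subsp2}), show $\Tr_k^n\big(\lambda(c)\big)\neq 0$ by the same device used in Theorem~\ref{th:binom_gf} — assume it vanishes, use the consequence of \eqref{eq:V_def} to rewrite $\lambda(c)^{3^k}$, and reach a contradiction of the form $(C-1)^{2\cdot3^k+1}=0$ with $C=c^{2(3^k-1)}\ne1$ — and conclude $D_cf$ is balanced, so Theorem~\ref{th:MM_crit2} applies. In the $k$-odd case, where $a_1=a_3=1$ and $a_2=\pm I$, $V$ is self-orthogonal; I would instead exhibit a supplementary $W$ on which $\pi$ (read off from $D_cf(x)=c\cdot\pi(w)$) is a permutation, again via the coprimality of $2\cdot3^k+1$ with $3^k-1$, and then invoke Proposition~\ref{pr:MM_crit3}. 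The degree-four claim follows from the exponents $2\cdot3^k+4,\,3^k+5,\,2$ whose base-$3$ digit sums are $4$ in the relevant range, and completeness in the Maiorana-McFarland class is immediate from Proposition~\ref{pr:MM_crit1}.

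\textbf{Main obstacle.} The hardest part is the bookkeeping in the second-order derivative: unlike the binomial case there are three coefficients $a_1,a_2,a_3$ with a two-regime definition (odd vs.\ even $k$), and the trinomial has exponents $2\cdot3^k+4$ and $3^k+5$ rather than the cleaner $2(3^k+1)$ and $(3^k+1)^2$, so expanding $D_{c,d}f$ yields more monomials and the cyclotomic-grouping step requires care to make sure every surviving bracket is a multiple of the single relation \eqref{eq:V_def}. The secondary obstacle is handling the $k$-odd case separately, where one cannot use Theorem~\ref{th:MM_crit2} directly and must produce an explicit supplementary subspace $W$ on which $\pi$ is bijective; verifying that bijectivity — rather than mere balancedness — is where the odd and even cases genuinely diverge and must be argued independently.
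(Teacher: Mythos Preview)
Your overall plan is right --- second-order derivative vanishes on $V\times V$, then handle first-order derivative via Theorem~\ref{th:MM_crit2} for $k$ even and Proposition~\ref{pr:MM_crit3} for $k$ odd --- and the paper does exactly this. But you have imported the wrong key exponent from the binomial proof, and this is a genuine gap.

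In the paper, after reducing $D_cf(x)$ using \eqref{eq:V_def} and \eqref{eq:traces} and restricting to $x\in\fthreek$ (the supplementary subspace $W$), the derivative simplifies to
\[
D_cf(x)=\Tr_k\Big(\big(a_1 c + a_1^{3^k} c^{3^k}+(c+c^{3^k})(a_2+a_2^{3^k})\big)\,x^{5}\Big),
\]
i.e.\ the surviving permutation monomial is $x^{5}$, not $x^{2\cdot3^k+1}$. The needed arithmetic is $\gcd(5,3^k-1)=1$, which holds precisely when $4\nmid k$ (since $5\mid 3^4-1$). Your claimed reduction to $x^{2\cdot3^k+1}$ cannot be correct, because $\gcd(2\cdot3^k+1,3^k-1)=1$ for \emph{every} $k$, so it would give bentness with no restriction on $k$ --- contradicting the hypothesis of the theorem. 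The exponent $5$ is where the ``$k$ not divisible by four'' condition enters, and your sketch never touches it.

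Consequently, your plan for the even-$k$ nonvanishing check is also off. The paper does not get a $(C-1)^{2\cdot3^k+1}=0$ contradiction as in the binomial case; instead it must show that the scalar $a_1c+a_1^{3^k}c^{3^k}+(c+c^{3^k})(a_2+a_2^{3^k})$ is nonzero. This is done by setting $y=a_1-a_2$, deriving the auxiliary equation $a_2^{3^k}+y=-a_2y^{3^k-1}$, substituting $y=-a_2^{3^k}(z+1)$ with $z^{3^k+1}=1$, and finally reaching $a_1+a_2^{3^k}=0$ or $a_1+a_2+a_2^{3^k}=0$, both impossible by the explicit definition $a_1=\alpha^{3^k+2}$. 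For $k$ odd the paper gets the cleaner $D_cf(w)=\Tr_n(cw^5)$ and applies Proposition~\ref{pr:MM_crit3} directly. So: your architecture is fine, but the crucial exponent and the even-$k$ endgame need to be redone.
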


\begin{proof}
To prove bentness, we make use of Theorem~\ref{th:MM_crit2} and start with calculating the second-order derivatives of $f$ in the direction 
of elements $c,d\in V$, where $V$ is defined in Lemma~\ref{le:subsp2}. Using a computer algebra package, we obtain that 
\begin{align*}
D_{c,d}f(x)&=\\
\Tr_n\Big(&a_1c^{2\cdot3^k+3}d+a_1c^{2\cdot3^k+1}d^3+a_1c^{2\cdot3^k}d^4-a_1c^{3^k+4}d^{3^k}-a_1c^{3^k+3}d^{3^k+1}\\
&-a_1c^{3^k+1}d^{3^k+3}-a_1c^{3^k}d^{3^k+4}+a_1c^4d^{2\cdot3^k}+a_1c^3d^{2\cdot3^k+1}+a_1cd^{2\cdot3^k+3}\\
&+a_2c^{3^k+3}d^2-a_2c^{3^k+4}d+a_2c^{3^k+2}d^3-a_2c^{3^k+1}d^4+a_2c^{3^k}d^5\\
&+a_2c^5d^{3^k}-a_2c^4d^{3^k+1}+a_2c^3d^{3^k+2}+a_2c^2d^{3^k+3}-a_2cd^{3^k+4}-a_3cd\\
&+a_1x(c^{2\cdot3^k}d^3-c^{3^k+3}d^{3^k}-c^{3^k}d^{3^k+3}+c^3d^{2\cdot3^k})\\
&-a_2x(c^{3^k+3}d+c^{3^k+1}d^3+c^{3^k}d^4+c^4d^{3^k}+c^3d^{3^k+1}+cd^{3^k+3})\\
&-a_1x^{3^k}(c^{3^k+3}d+c^{3^k+1}d^3+c^{3^k}d^4+c^4d^{3^k}+c^3d^{3^k+1}+cd^{3^k+3})\\
&+a_2x^{3^k}(c^3d^2-c^4d+c^2d^3-cd^4)\\
&+a_2x^2(c^{3^k}d^3+c^3d^{3^k})+a_1x^{2\cdot3^k}(c^3d+cd^3)\\
&-x^{3^k+1}(a_2c^3d+a_2cd^3+a_1c^{3^k}d^3+a_1c^3d^{3^k})\\
&+a_2x^3(c^{3^k}d^2-c^{3^k+1}d+c^2d^{3^k}-cd^{3^k+1})+a_1x^3(c^{2\cdot3^k}d-c^{3^k+1}d^{3^k}-c^{3^k}d^{3^k+1}+cd^{2\cdot3^k})\\
&-a_1x^4c^{3^k}d^{3^k}-a_2x^4(c^{3^k}d+cd^{3^k})\\
&-a_1x^{3^k+3}(c^{3^k}d+cd^{3^k})-a_2cdx^{3^k+3}\Big)\enspace.
\end{align*}
If $c$ or $d$ is zero then, obviously, $D_{c,d}f(x)=0$, so further we assume $c,d\neq 0$. 

We eliminate the terms in $D_{c,d}f(x)$ by observing the following. Since $c,d$ belong to the same multiplicative coset of $\fthreek$, then 
$c^{-1}d\in\fthreek$ and, for example, 
\[\Tr_n(a_1c^{2\cdot3^k+3}d)=\Tr_k(c^{2(3^k+1)}c^{-1}d\Tr^n_k(a_1c^2))\stackrel{(\ref{eq:traces})}{=}0\enspace.\]
Using similar observations and (\ref{eq:traces}) we eliminate all constant terms (not depending on $x$) and all the terms at $x^{3^k+1}$ 
(since $x^{3^k+1}\in\fthreek$) in $D_{c,d}f(x)$. Also observe that, for instance, $c^{3^k}d^2=cd^{3^k+1}$ and similar identities allow to 
eliminate all the terms at $x^3$ and collect terms at $a_2x$, $x^4$, $a_1x^{3^k}$, and $x^{3^k+3}$. Also, $-a_2x^{3^k}(c^4d+cd^4)$ is 
cyclotomic equivalent to 
\begin{align*}
-a_2^{3^k}x(c^{4\cdot3^k}d^{3^k}+c^{3^k}d^{4\cdot3^k})&\stackrel{(\ref{eq:V_def})}{=}
-a_2^{3^k}a_1^{-2(3^k-1)}x(c^4d^{3^k}+c^{3^k}d^4)\\
&\stackrel{(\ref{eq:tri_a12})}{=}a_2x(c^4d^{3^k}+c^{3^k}d^4)
\end{align*}
that is also collected at $a_2x$. We obtain
\begin{align*}
D_{c,d}f(x)&=\\
\Tr_n\Big(&a_1x(c^{2\cdot3^k}d^3-c^{3^k+3}d^{3^k}-c^{3^k}d^{3^k+3}+c^3d^{2\cdot3^k})\\
&-a_2x(c^{3^k+1}d^3+c^{3^k}d^4+c^4d^{3^k}+c^3d^{3^k+1})\\
&-a_1x^{3^k}(c^{3^k+1}d^3-c^{3^k}d^4-c^4d^{3^k}+c^3d^{3^k+1})\\
&+a_2x^{3^k}(c^2d^3+c^3d^2)\\
&+a_2x^2(c^3d^{3^k}+c^{3^k}d^3)+a_1x^{2\cdot3^k}(cd^3+c^3d)\\
&+(x^4d^{3^k}-x^{3^k+3}d)(a_2c-a_1c^{3^k})\Big)\enspace.
\end{align*}
Further, by (\ref{eq:V_def}), the coefficient at $x^4d^{3^k}-x^{3^k+3}d$ is zero, some terms at $x$ and $x^{3^k}$ are cancelled out and 
combined by (\ref{eq:tri_a2}) that results in 
\begin{align*}
D_{c,d}f(x)=&\Tr_n\Big(a_2x(cd^{3^k+3}+c^{3^k+3}d)+a_1x^{3^k}(c^4d^{3^k}+c^{3^k}d^4)\\
&+a_2x^2(c^3d^{3^k}+c^{3^k}d^3)+a_1x^{2\cdot3^k}(cd^3+c^3d)\Big)\enspace.
\end{align*}

Finally, by cyclotomic equivalence, raise coefficients at $x^{3^k}$ and $x^{2\cdot3^k}$ to the power of $3^k$ to obtain 
\begin{align*}
&a_1^{3^k}(c^{4\cdot 3^k}d+cd^{4\cdot 3^k})\stackrel{(\ref{eq:V_def})}{=}a_1^{3^k-3}a_2^3(cd^{3^k+3}+c^{3^k+3}d)\\
&a_1^{3^k}(c^{3^k}d^{3\cdot 3^k}+c^{3\cdot 3^k}d^{3^k})\stackrel{(\ref{eq:V_def})}{=}a_1^{3^k-3}a_2^3(c^3d^{3^k}+c^{3^k}d^3)
\end{align*}
that gives
\begin{align*}
D_{c,d}f(x)=&\Tr_n\Big(x(a_2+a_1^{3^k-3}a_2^3)(cd^{3^k+3}+c^{3^k+3}d)\\
&+x^2(a_2+a_1^{3^k-3}a_2^3)(c^3d^{3^k}+c^{3^k}d^3)\Big)\stackrel{(\ref{eq:tri_a2})}{\equiv}0\enspace.
\end{align*}

Now calculate the first-order derivatives of $f$ in the direction of elements $c\in V\setminus\{0\}$, where $V$ is defined in 
Lemma~\ref{le:subsp2} and prove they are balanced. For $k$ odd, we apply Proposition~\ref{pr:MM_crit3} proving a stronger property that 
$D_cf(v+w)=\Tr_n(c\pi(w))$ with $v\in V$ and permutation $\pi : \fthreek\mapsto\fthreek$. 
\begin{align*}
D_cf(x)&=\\
\Tr_n\big(&a_1c^{2\cdot 3^k+4}+a_1c^{2\cdot 3^k+3}x+a_1c^{2\cdot 3^k+1}x^3+a_1c^{2\cdot 3^k}x^4-a_1c^{3^k+4}x^{3^k}\\
&-a_1c^{3^k+3}x^{3^k+1}-a_1c^{3^k+1}x^{3^k+3}-a_1c^{3^k}x^{3^k+4}+a_1c^11x^{2\cdot 3^k}+a_1c^3x^{2\cdot 3^k+1}\\
&+a_1cx^{2\cdot 3^k+3}+a_2c^{3^k+5}-a_2c^{3^k+4}x+a_2c^{3^k+3}x^2+a_2c^{3^k+2}x^3-a_2c^{3^k+1}x^4\\
&+a_2c^{3^k}x^5+a_2c^5x^{3^k}-a_2c^4x^{3^k+1}+a_2c^3x^{3^k+2}+a_2c^2x^{3^k+3}-a_2cx^{3^k+4}\\
&-a_3cx+a_3c^2\big)\\
&=-D_{c,c}f(x)\\
&+\Tr_n\Big(x\big(a_1c^{2\cdot 3^k+3}-a_2c^{3^k+4}-a_3c\big)+x^3\big(a_1c^{2\cdot 3^k+1}+a_2c^{3^k+2}\big)\\
&-x^{3^k}\big(a_1c^{3^k+4}-a_2c^5\big)-x^{3^k+4}\big(a_1c^{3^k}+a_2c\big)+a_1c^3x^{2\cdot 3^k+1}\\
&+a_1cx^{2\cdot 3^k+3}+a_2c^{3^k}x^5+a_2c^3x^{3^k+2}\Big)\\
&\stackrel{(\ref{eq:V_def})}{=}\Tr_n\big(-a_3cx-a_2c^{3^k+2}x^3+a_2cx^{3^k+4}+a_1c^3x^{2\cdot 3^k+1}\\
&+a_1cx^{2\cdot 3^k+3}+a_2c^{3^k}x^5+a_2c^3x^{3^k+2}\big)\\
&\stackrel{(\ref{eq:tri_a2},\ref{eq:V_def})}{=}\Tr_k\Big(x^{3^k+1}\Tr^n_k\big(a_1c^3x^{3^k}+a_2c^3x\big)\Big)\\
&+\Tr_n\Big(c^3 x^3\big(a_1^{-(3^k-2)}-a_3^3\big)+a_2cx^{3^k+4}+a_1cx^{2\cdot 3^k+3}+a_2c^{3^k}x^5\Big)\\
&=\Tr_n\big(a_2cx^{3^k+4}+a_1cx^{2\cdot 3^k+3}+a_2c^{3^k}x^5\big)\\
&=\Tr_k\Big(\big(a_2 x^4+a_2^{3^k} x^{4\cdot 3^k}\big)\big(cx^{3^k}+c^{3^k}x\big)+x^{2(3^k+1)}\big(a_1c x+a_1^{3^k}c^{3^k}x^{3^k}\big)\Big)
\end{align*}
using that $D_{c,c}f(x)\equiv 0$ and since
\begin{align*}
&\Tr^n_k\big(a_1c^3x^{3^k}+a_2c^3x\big)\\
&\stackrel{(\ref{eq:V_def})}{=}c^3\big(a_1x^{3^k}+a_2x+a_1^{3^k-3}a_2^3x+a_1^{-3}a_2^{3^k+3}x^{3^k}\big)\\
&\stackrel{(\ref{eq:tri_a2})}{=}c^3\big(a_1x^{3^k}+a_2x-a_2x-a_1x^{3^k}\big)=0\enspace.
\end{align*}

Since $D_{c,d}f(x)\equiv 0$ on $\fthreen$ for any $c,d\in V$, we conclude that $D_c f(x)$ is constant on the cosets $x+V$ for any 
$x\in\fthreen$. Thus, to prove that $D_c f(x)$ is balanced it is sufficient to take $x$ running over a supplementary subspace of $V$. By 
Lemma~\ref{le:subsp2}, $V\cap\fthreek=\{0\}$ so we can assume $x\in\fthreek$ and obtain 
\[D_cf(x)=\Tr_k\Big(\big(a_1 c + a^{3^k}_1 c^{3^k}+\big(c+c^{3^k}\big)\big(a_2+a_2^{3^k}\big)\big)x^5\Big)\enspace.\]
Conditions on $k$ provide that $\gcd(3^k-1,5)=1$, so for proving that $D_cf(x)$ is balanced on $\fthreek$ we just need to check that 
coefficient at $x^5$ (that permutes $\fthreek$) is not zero. 

First, taking $k$ odd we have $a_1=1$, $a_2=\pm I$ so $a_2+a_2^{3^k}=0$ and $c+c^{3^k}=c(1\pm I)\neq 0$, then 
$D_cf(v+w)=D_cf(w)=\Tr_n(cw^5)$ with $w\in\fthreek$ and Proposition~\ref{pr:MM_crit3} can be applied. Finally, when $k$ is even, 
\begin{align*}
&a_1 c + a^{3^k}_1 c^{3^k}+\big(c+c^{3^k}\big)\big(a_2+a_2^{3^k}\big)=c a_1^{-1}\big(a_1^2 + a^{3^k}_1 a_2+(a_1+a_2)\big(a_2+a_2^{3^k}\big)\big)\\
&=c a_1^{-1}(a_1-a_2)\big(a_2^{3^k}+a_1-a_2+a_2(a_1-a_2)^{3^k-1}\big)\enspace.
\end{align*}
Denote $y=a_1-a_2$ that is not zero and solve 
\begin{equation}
 \label{eq:nz}
a_2^{3^k}+y=-a_2 y^{3^k-1}
\end{equation}
for the unknown $y$. Raising both sides of (\ref{eq:nz}) to the power of $3^k+1$ results in $a_2^{3^k}y^{3^k}+a_2 y+y^{3^k+1}=0$. Make 
substitution $y=-a_2^{3^k}(z+1)$ in the latter identity to obtain that $z^{3^k+1}=1$ and $z\neq -1$. Multiplying both sides of 
(\ref{eq:nz}) by $y\neq 0$ and using the substitution for $y$ gives 
\[-a_2^2(z^{3^k}+1)+a_2^{2\cdot 3^k}(z^2-z+1)-a_2^{2\cdot 3^k}(z+1)=-a_2^2(z^{-1}+1)+a_2^{2\cdot 3^k}(z^2+z)=0\enspace.\] 
Since  $z\neq -1$, equation (\ref{eq:nz}) becomes $-a_2^2+a_2^{2\cdot 3^k}z^2=0$ giving 
\[a_2^{3^k}z=-y-a_2^{3^k}=a_2-a_1-a_2^{3^k}=\pm a_2\enspace.\]
If $a_2-a_1-a_2^{3^k}=a_2$ then $a_1+a_2^{3^k}=0$ that is impossible from how $a_1$ and $a_2$ are defined. If $a_2-a_1-a_2^{3^k}=-a_2$ then 
$a_1+a_2+a_2^{3^k}=0$ that is also impossible since $a_1\notin\fthreek$.\qed
\end{proof}

\begin{note}
Take $n=2k$ with odd $k$ and function $f$ having the form of (\ref{eq:trinom}) with arbitrary $a_1,a_2,a_3\in\fthreen$. Substitute $x$ in 
$f(x)$ by $I x$, where $I$ is a primitive $4$th root of unity in $\fthreen$. Since $3^k+5\pmod{4}=0$ then we obtain function $-g(x)$ where 
\[g(x)=\Tr_n\big(a_1 x^{2\cdot3^k+4} - a_2 x^{3^k+5} + a_3 x^2\big)\enspace.\]
Therefore, functions $f(x)$ and $g(x)$ are EA-equivalent and either both are bent or not. This means that for odd $k$, changing sign in 
(\ref{eq:tri_a2}) gives equivalent bent functions. 
\end{note}

\section{Exceptional Polynomials Giving Bent Functions}
 \label{sec:MM_excep}
In this section, we present particular subclasses of bent functions in (\ref{eq:binom}) that can be represented by exceptional polynomials 
over $\fthreek$. The technique used is the same as in Section~\ref{sec:MM_binom1}. 

Let $n=4k$ with $k$ odd. Since polynomial $x^4-x^2-1=0$ is irreducible (of order $16$) over $\fthree$ it is also irreducible over 
$\fthreek$ (see \cite[Lemma~?]{LiNi97}). Taking $a\in\F_{3^4}$ such that $a^4-a^2-1=0$ we obtain that $\fthreen=\fthreek[a]$. Note that $a$ 
is an element of order $16$ in $\F_{3^4}$. In particular, any $x\in\fthreen$ can be uniquely represented as (\ref{eq:x_exp}). Using 
$a^4=a^2+1$, we can also calculate that 
\begin{align*}
\Tr^n_k(a)&=a+a^{3^k}+a^{3^{2k}}+a^{3^{3k}}=a+a^3+a^9+a^{11}=0=\Tr^n_k(a^3)\\
\Tr^n_k(a^2)&=a^2+a^6+a^{18}-a^{14}=-a^6-a^2=2\enspace.
\end{align*}
Knowing these first three traces and that $\Tr^n_k(1)=1$ we can calculate all $\Tr^n_k(a^i)\in\fthree$ for $i=0,\dots,15$. 

\begin{theorem}
 \label{th:binom2}
Let $n=4k$ with $k$ odd. Further, select $a\in\F_{3^4}$ such that $a^4-a^2-1=0$. Also let $a_1=a$ and $a_2=a^5$ (resp., $a_1=a^5$ and 
$a_2=a$) if $k\equiv 1\pmod{4}$ (resp., $k\equiv 3\pmod{4}$). Then ternary function $f(x)$ defined in (\ref{eq:binom}) mapping $\fthreen$ 
to $\fthree$ is a regular bent function of degree four belonging to the Maiorana-McFarland class. 
\end{theorem}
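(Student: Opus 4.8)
The plan is to argue exactly as in Section~\ref{sec:MM_binom1}, that is, by Dobbertin's technique of passing to the $\fthreek$-rational representation. I would first rewrite $f$ in the factored form $f(x)=\Tr_n\big(x^{2\cdot 3^k+1}(a_1x+a_2x^{3^{2k}})\big)$ and expand $x=x_3a^3+x_2a^2+x_1a+x_0$ as in~(\ref{eq:x_exp}), using the $\fthreek$-basis $\{1,a,a^2,a^3\}$ of $\fthreen$ and the relation $a^4=a^2+1$. The Frobenius powers are reduced modulo $a^{16}=1$: for every odd $k$ one has $3^{2k}\equiv 9\pmod{16}$, while $3^k\equiv 3\pmod{16}$ when $k\equiv 1\pmod 4$ and $3^k\equiv 11\pmod{16}$ when $k\equiv 3\pmod 4$. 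Collapsing the outer absolute trace with the precomputed values $\Tr^n_k(a^i)\in\fthree$, $i=0,\dots,15$, converts $f$ into a polynomial in $x_0,x_1,x_2,x_3$ over $\fthreek$. The two prescriptions $a_1=a,\ a_2=a^5$ for $k\equiv 1\pmod 4$ and $a_1=a^5,\ a_2=a$ for $k\equiv 3\pmod 4$ are made precisely so that the two residue classes yield, after at most a linear relabelling of the $x_i$, one and the same four-variable polynomial: passing from $k\equiv 1$ to $k\equiv 3$ only flips the signs of the coefficients of $x_1$ and $x_3$ coming from $a^{3^k}=a^{11}=-a^3$, and this is absorbed by swapping $a_1$ and $a_2$. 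Since the resulting $\fthreek$-polynomial does not depend on $k$, it is exceptional.

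Second, I expect this four-variable polynomial to be of the Maiorana-McFarland shape $\Tr_k\big(x_0\,P(x_1,x_3)+x_2\,Q(x_1,x_3)+R(x_1,x_3)\big)$, i.e.\ affine in the pair $(x_0,x_2)\in\fthreek^2$ over the complementary pair $(x_1,x_3)\in\fthreek^2$ with vectorial part $\pi=(P,Q):\fthreek^2\mapsto\fthreek^2$; this makes membership in the Maiorana-McFarland class immediate, and, since no further affine change of the input is needed, accounts for the statement saying Maiorana-McFarland rather than its completion. By the Maiorana-McFarland construction, $f$ is then bent, and regular, if and only if $\pi$ is a permutation of $\fthreek^2$; equivalently, summing the Walsh character sum over $x_0$ and over $x_2$ first gives $S_f(b)=3^{2k}\sum_{P=\beta_0,\,Q=\beta_2}\omega^{\Tr_k(R+\cdots)}$ with $\beta_0,\beta_2$ linear forms in the coordinates of $b$, and bentness amounts to this system having a unique solution for every $b$.

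Third, I would establish the permutation property of $\pi$, equivalently carry out the $b$-case analysis of Section~\ref{sec:MM_binom1}. Eliminating one of $x_1,x_3$ between $P=\beta_0$ and $Q=\beta_2$ leaves a quadratic in the other variable whose discriminant is a polynomial in $\beta_0,\beta_2$; since $k$ is odd, $-1$ is a nonsquare in $\fthreek$ and the relevant discriminants can be treated uniformly, so one counts the solutions case by case, re-parametrizing the surviving sum by the discriminant $t$ (the analogue of the substitutions $x_1\mapsto x_1-x_0$, ``$x_3=(x_0+x_1)(-1\pm\sqrt t)$'' in the proof of Theorem~\ref{th:binom1}) until the inner sum telescopes to a single term $3^{2k}\omega^{\Tr_k(\cdots)}$. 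This yields $|S_f(b)|^2=3^n$ for all $b$; the explicit value exhibits every $3^{-n/2}S_f(b)$ as a cube root of unity, so $f$ is regular, and degree four is immediate from the four-variable form.

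The principal obstacle is twofold. One part is purely computational but delicate: checking that the two residue classes of $k$ modulo $4$, with the prescribed swap of $a_1$ and $a_2$, really collapse to one and the same $\fthreek$-polynomial, and that this polynomial is affine in a pair of variables (a finite but fiddly computation with the order-$16$ element $a$ and the traces $\Tr^n_k(a^i)$). The other is Step~3: getting the $b$-case split and the re-parametrization of the summation variables exactly right so that every branch telescopes without leaving a residual Gauss sum. As a cross-check, note that $a$ has order $16$ in $\F_{3^4}$ and $k$ is odd, so $a$ and $a^5$ are nonsquares in $\fthreen$; hence $f$ is an instance of~(\ref{eq:binom}) with $a_1$ a nonsquare, and one expects $(a_1,a_2)$ to satisfy the defining relation~(\ref{eq:a2}), which would give an independent proof of bentness from Theorem~\ref{th:binom_gf}, the direct Dobbertin-style argument above then serving to exhibit the exceptional representation over $\fthreek$ and to pin down the stronger (uncompleted) Maiorana-McFarland membership.
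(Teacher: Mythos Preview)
Your approach coincides with the paper's: pass to the $\fthreek$-representation via the basis $\{1,a,a^2,a^3\}$ with $a^4=a^2+1$ and read off a Maiorana--McFarland form. The paper's proof, however, stops there. It simply displays the two four-variable polynomials (one for each residue class of $k$ mod $4$), observes that each is independent of $k$ within its class (hence exceptional) and is visibly of the form $\Tr_k\big(y\cdot\pi(z)\big)$, and writes \qed. Bentness and regularity are not re-proved; they are inherited from the section's framing of these as subcases of Theorem~\ref{th:binom_gf}. So what you list as a ``cross-check'' via~(\ref{eq:a2}) is in fact the paper's (implicit) route to bentness, and your Step~3 (Dobbertin-style verification that $\pi$ is a permutation) is not carried out.

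Two of your intermediate expectations do not hold, though neither is fatal. First, the two residue classes do \emph{not} collapse to a single polynomial: for $k\equiv 1\pmod 4$ the paper gets
\[
\Tr_k\big(-x_1(x_0^3+x_0 x_2^2+x_2^3)+x_3(x_0^3-x_0^2 x_2-x_2^3)\big),
\]
affine in $(x_1,x_3)$, while for $k\equiv 3\pmod 4$ it gets a genuinely different polynomial affine in $(x_0,x_2)$. The swap $a_1\leftrightarrow a_2$ is what places both choices inside the family~(\ref{eq:a2}), not a device for unifying the two expansions. Second, and relatedly, the pair of affine variables in the $k\equiv 1$ case is $(x_1,x_3)$, not $(x_0,x_2)$ as you anticipate. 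You would discover both points immediately upon computing; your plan then proceeds as two parallel cases. Should you still want to verify the permutation property directly instead of invoking Theorem~\ref{th:binom_gf}, note that in the $k\equiv 1$ case one has $P-Q=x_0(x_0^2+x_0x_2-x_2^2)$ and $P+Q=-x_2(x_0^2+x_0x_2-x_2^2)$, the common factor having discriminant $-x_2^2$, a nonsquare for odd $k$; this fixes the ratio $x_0:x_2$ and bijectivity of cubing finishes the argument without the heavier $t$-reparametrisation you sketch.
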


\begin{proof}
Using (\ref{eq:x_exp}), we convert $f(x)$ in its univariate representation into the polynomial of four variables over $\fthreek$. Note that 
$3^k\equiv 3\pmod{80}$ (resp., $3^k\equiv 27\pmod{80}$) for $k\equiv 1\pmod{4}$ (resp., $k\equiv 3\pmod{4}$) and $3^{2k}\equiv 9\pmod{80}$. 
Then if $k\equiv 1\pmod{4}$ then 
\begin{eqnarray}
 \label{eq:4var_rep_MM1}
\nonumber f(x)&=&\Tr_n\left(x^{2\cdot 3^k+1}\big(ax+a^5 x^{3^{2k}}\big)\right)\\
\nonumber&=&\Tr_n\Big((x_3 a^9+x_2 a^6+x_1 a^3+x_0)^2 (x_3 a^3+x_2 a^2+x_1 a+x_0)\\
\nonumber&&\ \times\big(x_3 a^4+x_2 a^3+x_1 a^2+x_0 a+(x_3 a^{32}+x_2 a^{23}+x_1 a^{14}+x_0 a^5)\big)\Big)\\
&=&\Tr_k\big(-x_1(x_0^3+x_0 x_2^2+x_2^3)+x_3(x_0^3-x_0^2 x_2-x_2^3)\big)\enspace.
\end{eqnarray}
Similarly, if $k\equiv 3\pmod{4}$ then
\begin{eqnarray}
 \label{eq:4var_rep_MM3}
\nonumber f(x)&=&\Tr_n\left(x^{2\cdot 3^k+1}\big(a^5 x+ax^{3^{2k}}\big)\right)\\
\nonumber&=&\Tr_n\Big((x_3 a+x_2 a^{54}+x_1 a^{27}+x_0)^2 (x_3 a^3+x_2 a^2+x_1 a+x_0)\\
\nonumber&&\ \times\big(x_3 a^8+x_2 a^7+x_1 a^6+x_0 a^5+(x_3 a^{28}+x_2 a^{19}+x_1 a^{10}+x_0 a)\big)\Big)\\
&=&\Tr_k\big(x_0(-x_1^3+x_1^2 x_3+x_3^3)+x_2(x_1^2 x_3+x_1 x_3^2-x_3^3)\big)\enspace.
\end{eqnarray}
The obtained polynomial over $\fthreek$ is exceptional since its formula does not depend on the field extension degree.\qed 
\end{proof}

\begin{theorem}
 \label{th:binom3}
Let $n=4k$ with $k\equiv 1\pmod{4}$. Further, select $a\in\F_{3^4}$ such that $a^4-a^2-1=0$ and let $a_1=a^5$ and $a_2=a^{-5}$. Then 
ternary function $f(x)$ defined in (\ref{eq:binom}) mapping $\fthreen$ to $\fthree$ is a regular bent function of degree four belonging to 
the Maiorana-McFarland class. 
\end{theorem}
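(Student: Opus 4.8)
The plan is to follow the exact same strategy that worked for Theorem~\ref{th:binom1} and Theorem~\ref{th:binom2}: substitute the univariate expansion (\ref{eq:x_exp}) into $f(x)$ given by (\ref{eq:binom}) with the specific coefficients $a_1=a^5$, $a_2=a^{-5}$ where $a^4-a^2-1=0$, and reduce $f$ to an explicit four-variable polynomial over $\fthreek$ whose shape does not depend on $k$ (modulo knowing $3^k\bmod 80$, which for $k\equiv1\pmod 4$ gives $3^k\equiv 3\pmod{80}$ and $3^{2k}\equiv 9\pmod{80}$). Concretely, I would write $f(x)=\Tr_n\big(x^{2\cdot 3^k+1}(a^5 x + a^{-5} x^{3^{2k}})\big)$, plug in $x=x_3a^3+x_2a^2+x_1a+x_0$, expand using $a^4=a^2+1$ and the precomputed values $\Tr^n_k(1)=1$, $\Tr^n_k(a)=\Tr^n_k(a^3)=0$, $\Tr^n_k(a^2)=2$ (note $a^{-5}=a^{11}$ since $a$ has order $16$), and collapse the outer $\Tr_n$ to $\Tr_k$ of a cubic form in $x_0,x_1,x_2,x_3$ over $\fthreek$.

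Once the four-variable reduction is in hand, I would show it has Maiorana--McFarland form by exhibiting, after an invertible linear change of the four variables over $\fthreek$, a bivariate split $\Tr_k\big(u\cdot\psi(v_1,v_2)+(\text{lower order in }u)\big)$ where $\psi$ is a permutation of $\fthreek^2$ — exactly mirroring the closing move in the proof of Theorem~\ref{th:binom1}, where $f$ was shown EA-equivalent to an explicit function of the form $\Tr_k(x_0\cdot(\cdots)+x_2\cdot(\cdots)+\cdots)$ that is manifestly in the completed $\mathcal{M}$ class. Since the resulting polynomial formula is independent of $k$ (it depends only on $3^k\bmod 80$, fixed once $k\equiv 1\pmod 4$), the same argument that establishes bentness for one admissible $k$ establishes it for all, and moreover exhibits the function as an exceptional polynomial over $\fthreek$ — which is the point of this section. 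Regularity then follows because Maiorana--McFarland bent functions are regular (Construction~1).

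The main obstacle is the bookkeeping in the expansion: raising $(x_3a^3+x_2a^2+x_1a+x_0)$ to a power, multiplying three such factors, reducing every power of $a$ modulo $a^4-a^2-1$ down to degree $<4$, and then applying $\Tr_k$ termwise using the table of $\Tr^n_k(a^i)$ for $i=0,\dots,15$. This is purely mechanical (and the authors clearly did it with a computer algebra package, as in the other proofs), but it is the only place an error could creep in, and one must be careful that the choice $a_1=a^5$, $a_2=a^{-5}$ is precisely what makes the cross-terms cancel so that the result lands in $\mathcal{M}$ rather than merely being bent. A secondary check is verifying that the linear change of variables used to expose the $\mathcal{M}$-structure is actually invertible over $\fthreek$ (equivalently, that the relevant coordinate map — the analogue of the permutation $\pi$ in Construction~1 — is a bijection), which amounts to checking a $\gcd$ condition or a small determinant, analogous to the $\gcd(2\cdot 3^k+1,3^{2k}-1)=1$ computation in Theorem~\ref{th:binom_gf}.
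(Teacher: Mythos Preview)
Your plan is essentially the paper's own proof: expand $f(x)=\Tr_n\big(x^{2\cdot 3^k+1}(a^5 x+a^{-5}x^{3^{2k}})\big)$ via (\ref{eq:x_exp}) using $3^k\equiv 3$, $3^{2k}\equiv 9\pmod{80}$ and $a^{-5}=a^{11}$, and read off a four-variable polynomial over $\fthreek$ that does not depend on $k$. One simplification relative to what you anticipate: no linear change of variables is needed --- the raw expansion already comes out linear in $x_1$ and $x_3$, namely $\Tr_k\big(x_1(x_0^3-x_0^2x_2-x_2^3)-x_3(x_0^2x_2+x_0x_2^2-x_2^3)\big)$, so the Maiorana--McFarland split is immediate; your caution about verifying that the associated map $(x_0,x_2)\mapsto(\pi_1,\pi_2)$ is a permutation is well-placed, as the paper does not spell this out.
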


\begin{proof}
Using (\ref{eq:x_exp}), we convert $f(x)$ in its univariate representation into the polynomial of four variables over $\fthreek$. Note that 
$3^k\equiv 3\pmod{80}$ for $k\equiv 1\pmod{4}$ and $3^{2k}\equiv 9\pmod{80}$. Then 
\begin{eqnarray}
 \label{eq:4var_rep_MM1}
\nonumber f(x)&=&\Tr_n\left(x^{2\cdot 3^k+1}\big(ax+a^5 x^{3^{2k}}\big)\right)\\
\nonumber&=&\Tr_n\Big((x_3 a^9+x_2 a^6+x_1 a^3+x_0)^2(x_3 a^3+x_2 a^2+x_1 a+x_0)\\
\nonumber&&\ \times\big(x_3 a^8+x_2 c^7+x_1 a^6+x_0 a^5+(x_3 a^{22}+x_2 a^{13}+x_1 a^4+x_0 a^{11})\big)\Big)\\
&=&\Tr_k\big(x_1(x_0^3-x_0^2 x_2-x_2^3)-x_3(x_0^2 x_2+x_0 x_2^2-x_2^3)\big)\enspace.
\end{eqnarray}
The obtained polynomial over $\fthreek$ is exceptional since its formula does not depend on the field extension degree.\qed 
\end{proof}

Note that in the proof of Theorem~\ref{th:binom1} it is only required that $k$ is odd ($k\equiv 1\pmod{4}$ is also accepted). This means 
that the $4$-variable function over $\fthreek$ defined in (\ref{eq:4var_rep}) is bent for any odd $k$. Remarkable that the underlying 
polynomial over $\fthreek$ does not depend on $k$ meaning that our bent function belongs to the rare class of exceptional polynomials. 

\bibliographystyle{IEEEtran}
\bibliography{IEEEabrv,mrabbrev,all}

\end{document}